\renewcommand{\to}{\rightarrow}
\newcommand{\e}{\varepsilon}
\newcommand{\p}{\mathbb{P}}
\newcommand{\E}{\mathbb{E}}
\newcommand{\N}{\mathbb{N}}
\newcommand{\Z}{\mathbb{Z}}
\newcommand{\R}{\mathbb{R}}
\newcommand{\one}{\mathbb{I}}
\newcommand{\be}{\begin{equation}}
\newcommand{\ee}{\end{equation}}
\newcommand{\benn}{\begin{equation*}}
\newcommand{\eenn}{\end{equation*}}
\newcommand{\baln}{\begin{align*}}
\newcommand{\ealn}{\end{align*}}
\newcommand{\bal}{\begin{align}}
\newcommand{\eal}{\end{align}}
\newcommand{\bee}{\begin{eqnarray}}
\newcommand{\eee}{\end{eqnarray}}
\newcommand{\msp}{\mspace{-1mu}}
\newcommand{\ZZ}{\mathrm{ZigZag}}
\newcommand{\StBr}{\mathrm{StBr}}
\newcommand{\ShZZ}{\mathrm{ShortZigZag}}
\newcommand{\Unf}{\mathrm{Unf}}
\newcommand{\RB}{\mathrm{RB}}
\newcommand{\W}{\mathrm{W}}
\newcommand{\B}{\mathrm{B}}
\newcommand{\iB}{\mathrm{iB}}
\newcommand{\iRB}{\mathrm{iRB}}
\newcommand{\wSAW}{\mathrm{SRP}}
\newcommand{\wSAB}{\mathrm{SRB}}
\title{Sub-ballisticity of self-repelling polymers in $\Z^d$.}
\author{Daria Smirnova}
\date{\today}
\begin{document}


%
%
%
%



\addcontentsline{toc}{section}{Bibliography}

\newtheorem{theorem}{Theorem}

\newtheorem{opr}{Definition}
\newtheorem{lemma}[theorem]{Lemma}
\newtheorem{stat}[theorem]{Statement}
\newtheorem{cor}[theorem]{Corollary}
\newtheorem{prop}[theorem]{Proposition}
\newtheorem{ass}[theorem]{Assumption}

\newtheorem{case}{Case}
\renewcommand{\thecase}{\Alph{case}} 

\newtheorem{remark}{Remark}

\renewcommand{\thecase}{\Alph{case}} 


\maketitle

\begin{abstract}
In this article, we prove sub-ballisticity for a class of self-repelling polymers in $\Z^d$. Self-repelling polymers are a two-way generalization of the model of self-avoiding walks, for which the sub-ballisticity was proved by H.~Duminil-Copin and A.~Hammond. Namely, we consider an arbitrary finite symmetric distribution of steps and a more flexible penalization for self-intersections than in the self-avoiding walks model.
\end{abstract}



\section* {Introduction}

\par The model of self-repelling polymers is a probabilistic model defined on a discrete lattice. It is a generalization of the well-known model of self-avoiding walks, in which intersections in a trajectory are allowed at the cost of decreasing the probability of the trajectory. We consider the case where the decrease is computed from a multiplicative coefficient which depends on the number of self-intersections in the trajectory.  Both simple random walks and self-avoiding walks are special cases of self-repelling polymers. We note that, similarly to the model of self-avoiding walks, self-repelling polymers are non-markovian in most cases, in the contrary to the model of simple random walks.

\par self-avoiding walks were introduced by chemist P.~Flory \cite{F49} in the middle of the twentieth century to describe the geometrical shape of polymer chains. Even though in physics a polymer macromolecule is considered in 3-dimensional continuous space with the bond angles usually equal to $\arccos(-1/3)$ (and not to $\pi$ or $\pi/2$), self-avoiding walks on the square lattice can be used as a mathematical model for some aspects of the behavior for the polymer chains. Indeed, Flory predicted universality of the model and the independence of the general behavior of this model with respect to the lattice. This justifies the fact that the model defined on the square lattice is useful for the study of physical polymers. 

\par The similarities between the mathematical model and polymer chains that could be studied experimentally allowed to make many conjectures about the behavior of self-avoiding walks.
Monte-Carlo simulations also give approximate values for several constants of the model and confirmed some of these conjectures, see \cite{S94}.


\par A famous hypothesis \cite{LSW02} states that the distribution of self-avoiding walks on two-dimensional lattices converges to the Schramm-Loewner Evolution of parameter ${{8}/{3}}$. The validity of this conjecture would imply many properties describing the behavior of self-avoiding walks when their length tends to infinity. One of the corollaries would be that for any lattice of dimension at least $2$, the mean-squared distance between the beginning and the end of self-avoiding walks of length $n$ behaves like $n^{2\nu + o(1)}$ with $\nu<1$ \cite{MS96, LSW02}. Note that the latter would imply sub-balisticity, i.e an exponential upper bound on the probability for a self-avoiding walk to go linearly far away from the beginning. Recently, H.~Duminil-Copin and A.~Hammond gave a rigorous proof of sub-balisticity for the lattices $\Z^d$ when $d\geq 2$ \cite{DCH13}.



\par In this paper, we use the method of \cite{DCH13} to extend the sub-ballisticity to a more general model of self-repelling polymers. This model can be used as a better approximation for polymer chains taking into account monomer-monomer connections of different length and a possibility that different parts of the chain can have quite small distance between them. The main result of the paper is that sub-ballisticity holds for this class of models as well.

\par Let us define rigorously the class of self-repelling polymers and state the main result. We start by defining spread-out random walks.


\begin{opr}[Spread-out random walk]
\par Suppose $\Omega$ is a finite subset of vertices of $\Z^d$ which is preserved under the symmetries of $\Z^d$ and does not contain zero. A {\em walk of length $n$} is a sequence $\gamma = (\gamma(i))_{i=0}^n$ of $n+1$ vertices in $\Z^d$ such that $\gamma(i) -\gamma(i-1) \!\in \!\Omega$ for every $0 < i\le n$.
\par The set of walks of length $n$ beginning at $0$ is denoted by $\mathrm{W}_n^\Omega$ (later, we omit the set $\Omega$ in the notation).
The length of $\gamma$ will be denoted by  $|\gamma|$.
\end{opr}


\par The self-repelling polymer is a model of a spread-out random walk with a self-repelling interaction. We follow  \cite{IV08} for the definition.

\begin{opr}[Self-repelling polymer] \label{def}
Consider  $\phi: \Z_+ \to \R_+$ (called the {\em potential}) such that for any $a,\, b \in \Z_+$,
\be \label{phi}
\phi(a+b) \ge \phi(a) + \phi(b)
\ee 
and $\phi(0)=\phi(1)=0$. Note that $\phi$ is necessarily non-decreasing.
Let $l_v(\gamma) = \sum_{k=0}^{|\gamma|} \one_{\gamma(k) = v}$ denote the number of times $\gamma$ visits the vertex $v \in \Z^d$. Let $\rho$ be a jump-distribution on the lattice (i.e. a probability mass function on $\Omega$).
To any $\gamma \in \W_n$,  associate the weight $\sigma(\gamma)$ defined by
\be \label{weight}
\sigma(\gamma) := \biggl(\prod_{v \in \Z^d} e^{-\phi (l_v(\gamma))} \biggr) \biggl(\prod_{i=1}^n \rho (\gamma(i)-\gamma(i-1))\biggr).
\ee
The measure of self-repelling polymers is defined by
\be \nonumber
\forall \gamma_0 \in \W_n , \quad \p_{\wSAW_n}(\gamma_0) = \frac{\sigma(\gamma_0)}{\sum_{\gamma \in \mathrm{W}_n} {\sigma(\gamma)}}.
\ee
\end{opr}
The case $\phi(a) = 0$ for any $a \ge 0 $ corresponds to the classical random walk model. If $\phi(a) = +\infty$ for any $a \ge 2$, then no intersection is allowed and the model corresponds to the self-avoiding walk. For any intermediate potential, intersections of $\gamma$ are allowed but decrease the probability of a walk. The case $\phi(a) = k \cdot (a-1)_+$ is called weakly self-avoiding walks \cite{Sl05}. 


\begin{theorem} \label{MT}
Consider the self-repelling polymer with a jump-distribution $\rho$ which is invariant under the symmetries of the lattice, then
\benn
\lim_{n \to \infty} \tfrac{1}{n} \log \E_{\wSAW_n} \left( |\gamma(n)| \right) < 0.
\eenn
\end{theorem}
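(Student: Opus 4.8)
We adapt the surgery method of \cite{DCH13} to the potential $\phi$ and to spread-out steps. Write $Z_n=\sum_{\gamma\in\W_n}\sigma(\gamma)$ and $Z_n(A)=\sum_{\gamma\in A}\sigma(\gamma)$ for a set of walks $A$, so $\p_{\wSAW_n}(A)=Z_n(A)/Z_n$. Since $\rho$ and $\Omega$ are invariant under the symmetries of $\Z^d$, the $2d$ events $\{\pm\,\gamma(n)\cdot e_j\ge\e n\}$, $1\le j\le d$, have a common probability and their union contains $\{|\gamma(n)|\ge\e\sqrt d\,n\}$; it therefore suffices to prove the quantitative sub-ballisticity estimate: for every $\e>0$ there is $c=c(\e)>0$ with $Z_n\bigl(\{\gamma\in\W_n:\gamma(n)\cdot e_1\ge\e n\}\bigr)\le e^{-cn}Z_n$ for all large $n$. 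Two facts will be used freely. First, for $u$ in the support of $\rho$, the straight walk $i\mapsto iu$ has all local times $1$ and weight $\rho(u)^n$, while $\sigma(\gamma)\le\prod_i\rho(\gamma(i)-\gamma(i-1))$; hence $\rho(u)^n\le Z_n\le1$. Second, the free energy $\log\lambda:=\lim_n\tfrac1n\log Z_n$ exists in $(-\infty,0]$: this is a standard sub-additive argument — one cuts walks at vertices visited exactly once, where concatenation incurs no $\phi$-correction because $\phi(1)=0$, in the spirit of \cite{IV08} — so $Z_m=\lambda^{\,m+o(m)}$.

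\textbf{The surgery.} Fix $\e>0$, put $M=\max_{\omega\in\Omega}|\omega|_1$, and use the slabs $\{jM\le x_1<(j+1)M\}$, $j\ge0$. Since every step changes $x_1$ by at most $M$, a walk $\gamma\in\W_n$ with $\gamma(n)\cdot e_1\ge\e n$ visits a contiguous block of these slabs and thus crosses at least $\e n/M-2$ of them; each crossed slab comes with a clean cut — a last time $\gamma$ lies weakly to its left together with the first later time it lies weakly to its right. A pigeonhole argument extracts a subset $G(\gamma)$, of size $\ge c_0 n$, of such cuts that are mutually well separated along $\gamma$. Fix a small $\delta>0$. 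At each cut one has available a short local modification $P$ — for definiteness, a detour of bounded length replacing a single step (the symmetry of $\Omega$ making $\mathrm{supp}(\rho)$ rich enough to route one) — and for $k=\lfloor\delta n\rfloor$ cuts $S\subseteq G(\gamma)$ one performs these $k$ modifications simultaneously, obtaining a walk $\Phi(\gamma,S)$ of length $m=n+O(k)$; there are $\binom{|G(\gamma)|}{k}\ge(c_0/\delta)^{\delta n}$ choices of $S$. The modification is to be chosen so that it changes $\sigma$ by at most a bounded factor per cut: it adds only finitely many $\rho$-factors, and it must add visits only where the potential can afford them — possible since $\phi$ is non-decreasing, superadditive (inequality \eqref{phi}), and vanishes at $0$ and $1$, though, unlike in the self-avoiding case, this is a genuine design constraint on the modification rather than a mere prohibition on intersections. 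Hence $\sigma(\Phi(\gamma,S))\ge\eta^{\,k}\sigma(\gamma)$ for a constant $\eta=\eta(\rho)\in(0,1]$ independent of $n$.

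\textbf{The count.} Assume moreover that $\Phi$ is invertible with controlled ambiguity: a fixed walk $\omega$ is the image of at most $N_n$ pairs $(\gamma,S)$ with $\gamma$ in the event, and $(c_0/\delta)^{\delta n}\eta^{\,k}\big/N_n\ge e^{c_1 n}$ for some $c_1=c_1(\delta)>0$. Summing $\eta^{\,k}\sigma(\gamma)\binom{|G(\gamma)|}{k}\le\sum_S\sigma(\Phi(\gamma,S))$ over $\gamma$ in the event and regrouping by the output gives $(c_0/\delta)^{\delta n}\eta^{\,k}\,Z_n(\{\gamma(n)\cdot e_1\ge\e n\})\le N_n\sum_{|m-n|=O(k)}Z_m$, so, using $Z_m=\lambda^{m+o(m)}$, $Z_n=\lambda^{n+o(n)}$ and $0<\lambda\le1$,
\benn
\frac{Z_n(\{\gamma(n)\cdot e_1\ge\e n\})}{Z_n}\;\le\;\frac{N_n}{(c_0/\delta)^{\delta n}\eta^{\,k}}\;\cdot\;e^{o(n)}\;\le\;e^{-c_1 n}\,e^{o(n)}\;\le\;e^{-cn},
\eenn
which is the required estimate. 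For small $\delta$ the quantity $(c_0/\delta)^{\delta n}\eta^{\,k}$ grows at exponential rate $\approx\delta\log(1/\delta)$, so the condition imposed on $N_n$ merely asks that the inversion ambiguity grow at a strictly slower exponential rate.

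\textbf{The main obstacle.} Virtually all the work lies in establishing the inversion bound on $N_n$: the modification $P$ and the rule for placing it must be chosen so that, from the output walk alone, the $k$ modified cuts can be recovered up to a number of possibilities that is exponentially dominated by the gain $\binom{|G(\gamma)|}{k}$ — equivalently, so that a genuine modification cannot be confused with a configuration the ambient walk carries by chance. This is the delicate combinatorics of \cite{DCH13}, which here has to be re-run with two features absent there: spread-out steps, which rule out one-step hyperplane crossings and so force the slab formulation above, the extraction of $G(\gamma)$ at the slab scale, and a (routine) check that the modified pieces are admissible walks; and the general potential $\phi$, which ties the weight bound $\sigma(\Phi(\gamma,S))\ge\eta^{\,k}\sigma(\gamma)$ to the constraint on the modification discussed above and rules out any reliance on a hard self-avoidance constraint. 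Granting the inversion bound, the computation above proves Theorem~\ref{MT}.
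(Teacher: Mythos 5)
Your write-up is not a proof but a conditional skeleton: the entire content is deferred to the ``inversion bound'' on $N_n$, which you simply assume (``Assume moreover that $\Phi$ is invertible with controlled ambiguity\dots Granting the inversion bound, the computation above proves Theorem~\ref{MT}''). Worse, as you have set the surgery up, that bound cannot be obtained for free: your gain is $\binom{|G(\gamma)|}{k}\approx(c_0e/\delta)^{\delta n}$, but to invert you must identify, inside the output walk, which $k=\lfloor\delta n\rfloor$ of the up to $O(n)$ places where the local pattern $P$ could sit were actually the modified cuts; that alone gives $N_n\approx\binom{Cn}{\delta n}\approx(Ce/\delta)^{\delta n}$, which is of the \emph{same} exponential order $e^{\delta\log(1/\delta)n(1+o(1))}$ as the gain, and the sign of the difference of rates is an $O(\delta n)$ quantity governed by constants ($c_0$, $C$, $\eta$, $\lambda^{O(\delta)}$ from the length change) that you do not control. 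A bare ``insert recognizable patterns at positive density'' argument is exactly what is known \emph{not} to yield sub-ballisticity; beating the ambiguity requires a structural reason why the preimage locations are confined to an exponentially smaller set.

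The paper (following \cite{DCH13}) supplies precisely that structure, and its route is genuinely different from what you describe. It reduces to bridges (Proposition~\ref{bcdecomp}), proves Kesten's identity $\sum_{\gamma\in\iB}e^{-\lambda_0|\gamma|}\sigma(\gamma)=1$ (Theorem~\ref{keslem}), which turns bridges into an i.i.d.\ concatenation of irreducible bridges under $\p_{\iB}^{\otimes\N}$ and makes all the exponential factors $e^{\lambda_0 n}$ cancel identically. The ballistic assumption is then refuted in two steps. First (Section~\ref{secBA}), unfolding zigzags --- an operation that only \emph{increases} the weight and \emph{creates} renewal points --- shows ballisticity forces a positive density of renewal points, hence $\E_{\iRB}(|\gamma|)<\infty$; there the inversion ambiguity is controlled because the reinserted zigzags must be anchored at renewal points of the image, of which there are few by hypothesis ($\binom{\delta'_nt_n+3D\e't_n/\e}{2\e't_n}$ vs.\ $\binom{\e t_n}{\e' t_n}$, with $\delta'_n\searrow0$). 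Second (Section~\ref{secT2}), ergodicity of the renewal shift plus the stickbreaking rotation at diamond points shows $\E_{\iRB}(|\gamma|)=\infty$; there the weights are \emph{exactly} preserved and the count is polynomial versus polynomial ($(\delta n/10)^2$ vs.\ $(\nu n)^2$), so no exponential battle occurs at all. None of these mechanisms --- Kesten's identity, renewal/ergodic structure, weight-monotone unfolding, diamond points --- appears in your proposal, and they are what closes the gap you left open.
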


\par In Section \ref{sec1}, we extend classical results known for self-avoiding walks to self-repelling polymers regarding the so-called connective constant.  
Then, we prove Theorem \ref{MT} by contradiction in two steps. In Section \ref{secBA}, we show that if Theorem \ref{MT} does not hold then the mean length of a so-called irreducible bridge is finite. In Section \ref{secT2}, we show that the mean length of irreducible bridges is infinite, which altogether with Section \ref{secBA} proves Theorem \ref{MT}.

\par Let us mention two open problems regarding improvements of Theorem \ref{MT} in two directions. 
The first one would be to release the assumption on symmetries for $\rho$. 
The second possible generalization would be to consider arbitrarily large jumps. 
Also, a modification of the proof could give some improvements on the exponential bound obtained in Theorem \ref{MT}.


\section {Preliminaries} \label{sec1}
\par In this section, we extend basic definitions and properties of self-avoiding walks to self-repelling polymers and recall the definition of bridges and irreducible bridges. We introduce the notion of connective constant and prove the analogue of Kesten's lemma in the case of self-repelling polymers.

\par Before doing all of that, we recall a few definitions. Below, $x(v)$ and $y(v)$ denote the first and second coordinates of $v \in \Z^d$.


\begin{opr}
For $\gamma \in \W_n$, the reflection of $\gamma$ under the hyperplane $\{v \in \Z^d, x(v)=0\}$ is denoted $\mathcal R _x(\gamma)$. If $\Z^d$ is invariant under the rotation by an angle $\alpha$, then the clockwise rotation of $\gamma$ around the origin is denoted $r_\alpha(\gamma)$. Note that $\mathcal R _x(\gamma)$ and $r_\alpha(\gamma)$ for $\alpha$ chosen as above belong to $\W_{n}$.


Consider $\gamma_1 \in \W_n$ and $\gamma_2 \in \W_m$. The concatenation $\gamma_1 \circ \gamma_2$ of $\gamma_1$ and $\gamma_2$ is the walk from $W_{n+m}$ defined by 
\be
\gamma_1 \circ \gamma_2 (i) =
\begin{cases}
\gamma_1(i) &\text{ if } 0 \le i \le n, \\
\gamma_1(n) + \gamma_2(i-n) &\text{ if } n \le i \le n+m. 
\end{cases}
\ee
\end{opr}

\subsection{Connective constant for self-repelling polymers}

Let $A$ be a subset of $\W_n$. Introduce 
\be
Z(A) := \sum_{\gamma \in A} \sigma(\gamma).
\ee
If $A = \W_n$, then we simply write ${Z}_n$.
With this notation, for any event $A \subset \W_n,$
\be
\p_{\wSAW_n}(A) = \frac{Z(A)}{Z_n}.
\ee

\begin{theorem}\label{mu_c}
The sequence $(\frac{1}{n}\log Z_n)$ converge. Furthermore, $Z^n \ge e^{\lambda_0 n}$, where
\benn
\lambda_0= \lim_{n \to \infty} \tfrac{1}{n}\log Z_n.\
\eenn
\end{theorem}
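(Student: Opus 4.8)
The plan is to prove the sub-multiplicative estimate $Z_{n+m+1}\le Z_nZ_m$ for all $n,m\ge 0$ and to deduce both assertions from Fekete's sub-additivity lemma. The starting point is that, unlike for self-avoiding walks, there is no avoidance constraint, so spread-out walks concatenate freely: a walk $\gamma\in\W_{n+m+1}$ is uniquely encoded by the triple $(\gamma_1,s,\gamma_2)$, where $\gamma_1=(\gamma(i))_{i=0}^{n}\in\W_n$, $s=\gamma(n+1)-\gamma(n)\in\Omega$, and $\gamma_2=(\gamma(n+1+i)-\gamma(n+1))_{i=0}^{m}\in\W_m$. This is a bijection $\W_{n+m+1}\leftrightarrow\W_n\times\Omega\times\W_m$ (the walk $\gamma$ being $\gamma_1$, followed by the step $s$, followed by $\gamma_2$).

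The heart of the matter --- and the step I expect to require genuine care --- is the weight comparison $\sigma(\gamma)\le\sigma(\gamma_1)\,\rho(s)\,\sigma(\gamma_2)$. The product of jump factors in \eqref{weight} factorises exactly along the splitting, contributing $\bigl(\prod\text{steps of }\gamma_1\bigr)\,\rho(s)\,\bigl(\prod\text{steps of }\gamma_2\bigr)$. For the potential factor I would use that the time sets $\{0,\dots,n\}$ and $\{n+1,\dots,n+m+1\}$ are \emph{disjoint}: the vertex multiset of $\gamma$ is the disjoint union of that of $\gamma_1$ and that of the translate of $\gamma_2$ by $\gamma(n+1)$, so $l_v(\gamma)=l_v(\gamma_1)+l_v(\text{translated }\gamma_2)$ for every $v$ with no overlap, and the super-additivity \eqref{phi} then gives $\phi(l_v(\gamma))\ge\phi(l_v(\gamma_1))+\phi(l_v(\text{translated }\gamma_2))$ vertex by vertex; summing over $v$ (the potential sum being translation-invariant) and combining with the jump factors yields the claimed bound. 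It is essential here to insert the wasted step $s$ rather than glue $\gamma_1$ directly onto $\gamma_2$ at a common vertex: in that case the junction vertex is over-counted, $l_v(\gamma)$ equals $l_v(\gamma_1)+l_v(\text{translated }\gamma_2)-1$ there, and --- since for $a,b\ge2$ one may well have $\phi(a+b-1)<\phi(a)+\phi(b)$, as for $\phi(k)=\lfloor k/2\rfloor$ --- the weight inequality can genuinely fail at that vertex. Summing the comparison over the bijection and using $\sum_{s\in\Omega}\rho(s)=1$ gives $Z_{n+m+1}\le Z_nZ_m$.

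To conclude: with $m=0$ (and $Z_0=1$) the inequality already gives $Z_{n+1}\le Z_n$, hence $Z_n\le 1$; on the other hand, for any $s$ with $\rho(s)>0$ the straight-line walk $0,s,2s,\dots,ns$ visits each site once, so $Z_n\ge\rho(s)^n>0$. Setting $g(k):=\log Z_{k-1}$ for $k\ge1$, the estimate reads $g(i+j)\le g(i)+g(j)$ with $g(1)=0$, so $(g(k))_{k\ge1}$ is sub-additive and bounded below by $(k-1)\log\rho(s)$; Fekete's lemma gives $g(k)/k\to\inf_{k\ge1}g(k)/k=:\lambda_0\in(-\infty,0]$. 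Since $\tfrac1n\log Z_n=\tfrac{n+1}{n}\cdot\tfrac{g(n+1)}{n+1}$, the sequence $(\tfrac1n\log Z_n)$ converges, to $\lambda_0$; this is the first assertion. Finally $\lambda_0=\inf_{k\ge1}g(k)/k$ gives $\log Z_n\ge(n+1)\lambda_0$, which yields the exponential lower bound $Z_n\ge e^{\lambda_0 n}$ of the theorem (up to the harmless factor $e^{\lambda_0}\le1$). The boundary cases $n=0$ or $m=0$, and the possibility that $\operatorname{supp}\rho\subsetneq\Omega$, present no difficulty.
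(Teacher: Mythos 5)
Your proposal is correct in substance, but at the key step it takes a genuinely different route from the paper, and the difference matters. The paper splits $\gamma\in\W_{n+m}$ as $\gamma_1\circ\gamma_2$ at a shared junction vertex and asserts $l_v(\gamma)=l_v(\gamma_1)+l_v(\gamma_2)$; as you observe, this identity fails at the junction vertex itself, where the correct count is $l_v(\gamma_1)+l_v(\gamma_2)-1$, and super-additivity \eqref{phi} does not yield $\phi(a+b-1)\ge\phi(a)+\phi(b)$ --- your example $\phi(k)=\lfloor k/2\rfloor$ with $a=b=2$ really does break the pointwise weight comparison $\sigma(\gamma)\le\sigma(\gamma_1)\sigma(\gamma_2)$. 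So your device of inserting one wasted step, which makes the time sets of the two pieces disjoint and gives the clean bound $Z_{n+m+1}\le Z_nZ_m$, is not an over-cautious variant but an actual repair of the paper's argument; Fekete's lemma applied to $g(k)=\log Z_{k-1}$ then gives the convergence exactly as you say. (Incidentally, the paper's claim that $(\log Z_n)$ is non-negative is also off under this normalization: since $\phi\ge0$ one has $\sigma(\gamma)\le\prod_i\rho(\gamma(i)-\gamma(i-1))$ and hence $Z_n\le1$; the input Fekete actually needs is the lower bound, which you supply via the straight walk.)

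The one point where you fall short of the literal statement is the final inequality: your argument gives $\log Z_n\ge(n+1)\lambda_0$, i.e.\ $Z_n\ge e^{\lambda_0}\,e^{\lambda_0 n}$, and since $\lambda_0\le0$ this is weaker than the stated $Z_n\ge e^{\lambda_0 n}$ by the constant factor $e^{\lambda_0}$. This is immaterial for every later use of the bound --- in particular the divergence \eqref{sumwalks} only needs $Z_ne^{-\lambda_0 n}$ to be bounded below by a positive constant --- but it is not quite "harmless" as a proof of the theorem as written: the exact form would require the exact sub-multiplicativity $Z_{n+m}\le Z_nZ_m$, which is precisely what the junction-vertex issue obstructs. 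You should either state the conclusion with the extra constant or add a remark to that effect.
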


\par From now on, we always denote $ \lim \frac{1}{n}\log Z_n$ by $\lambda_0$ and call it the connective constant. 
In the case of self-avoiding walks, $\lambda_0$ is the logarithm of the connective constant of the lattice $\mu_c$. Note that we are usually unable to compute this quantity, except in few cases (for instance, for self-avoiding walks, the connective constant of the hexagonal lattice is known \cite{DCS12} but not the one of the square lattice).

\begin{proof}
\par Any walk $\gamma \in \W_{n+m}$ can be decomposed in a unique way into two walks $\gamma_1 \in \W_{n}$ and $\gamma_2 \in \W_{m}$ so that $\gamma = \gamma_1 \circ \gamma_2$. 
The definition of the potential implies that 
\benn
\phi(l_x(\gamma)) = \phi\bigl(l_x(\gamma_1)+l_x(\gamma_2)\bigr)
 \ge \phi(l_x(\gamma_1))+\phi(l_x(\gamma_2)).
\eenn 
Thus, $\sigma(\gamma) \le \sigma(\gamma_1) \sigma(\gamma_2)$ and 
\be \label{cnadd}
Z_{n+m} \le {
\sum_{	\substack{	\gamma_1 \in \W_n\\ \gamma_2 \in \W_m}	}
\sigma(\gamma_1) \sigma(\gamma_2) } = Z_n Z_m.
\ee
The sequence $\left( \log Z_n\right)$ is therefore sub-additive and non-negative. Fekete's lemma thus implies the existence of a non-negative limit 
\be \nonumber
\lambda_0 = \lim_{n \to \infty} \tfrac{1}{n} \log Z_n
\ee
and the inequality $Z_n \ge e ^{n \lambda_0}$ for all $n$.
\end{proof}
\par Define 
$W (\lambda) = \sum_{n=0}^{\infty} Z_n e^{-\lambda n} $.
It follows from Cauchy-Hadamard Theorem that $W (\lambda)$ converges for any $\lambda > \lambda_0$ and diverges for any $\lambda < \lambda_0$.
Moreover, due to the bound $Z_n \ge e^{\lambda_0 n}$,
 the sum diverges at $ \lambda_0$: 
\be \label{sumwalks}
\sum_{n=0}^{\infty} Z_n e^{-\lambda_0 n} = \infty.
\ee

\begin{opr}[Bridge] \label{RB}
The walk $\gamma$ of length $n$ is called a {\em bridge} if
\be \label{bridgedef}
x(\gamma(0)) < x(\gamma(i)) \le x(\gamma(n)) \quad \text{for } 1 \le i \le n.
\ee
The set of all bridges of length $n$ is denoted $\B_n$ and we set $H_n = Z(\B_n)$.
\end{opr}

%
%
%

\begin{prop} \label{bcdecomp}
\par There exists a constant $C >0$ such that for any $n \in \N$
\be
e^{-C\sqrt{n}} Z_{n} \le H_n \le Z_n.
\ee
\end{prop}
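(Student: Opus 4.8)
The plan is to adapt the classical Hammersley–Welsh argument for self-avoiding walks to the self-repelling setting. The upper bound $H_n \le Z_n$ is immediate since $\B_n \subset \W_n$ and $\sigma \ge 0$. For the lower bound, the strategy is to show that every walk in $\W_n$ can be written as a concatenation of a bounded number of bridges whose lengths are controlled, so that the Hammersley–Welsh ``unfolding'' construction applies. The key structural fact, exactly as for self-avoiding walks, is that any walk $\gamma$ of length $n$ can be decomposed into a concatenation of bridges (with respect to the $x$-coordinate, up to reflections $\mathcal R_x$) whose consecutive $x$-spans are strictly decreasing. Since the $x$-span of each piece is a positive integer and they sum to at most the total $x$-range of $\gamma$, which is at most $R n$ where $R = \max_{v \in \Omega} |x(v)|$, the number of pieces in such a decomposition is $O(\sqrt{n})$: if there are $k$ pieces with distinct positive spans $s_1 > s_2 > \dots > s_k \ge 1$, then $\sum s_i \ge 1 + 2 + \dots + k = k(k+1)/2$, forcing $k = O(\sqrt n)$.

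The main work is to check that this decomposition is compatible with the weights $\sigma$. The point is the reverse inequality to \eqref{cnadd}: while $\sigma(\gamma) \le \prod_j \sigma(\gamma_j)$ under concatenation, here we need to go the other way and show $\sigma(\gamma) \ge c^{\sqrt n}\, Z(\text{bridges})$-type estimates. Concretely, I would follow \cite{DCH13} and \cite{IV08}: let $B_{n,k}$ be the generating sum over walks of length $n$ decomposable into exactly $k$ bridge-pieces with strictly decreasing spans, and show by the unfolding map that $Z_n \le \sum_{k \le C\sqrt n} (\text{reordering factor}) \cdot H_{n_1} H_{n_2} \cdots$, then iterate using superadditivity of $\phi$ in the form that concatenating bridges only \emph{loses} weight. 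Because $\phi(a+b) \ge \phi(a) + \phi(b)$ gives $\sigma(\gamma_1 \circ \gamma_2) \le \sigma(\gamma_1)\sigma(\gamma_2)$, the ``glueing'' of bridge-pieces back into one walk does not increase weight, which is the inequality one needs to push the bridge generating functions in the right direction; conversely, to recover $Z_n$ from bridges one uses that any walk's unfolding into a single bridge multiplies lengths and the number of unfolding steps is $O(\sqrt n)$, each contributing at most a constant factor loss. Summing the sub-exponentially many terms $\binom{C\sqrt n}{\cdot}$-style contributes only $e^{o(n)} = e^{O(\sqrt n)}$, absorbed into $e^{C\sqrt n}$.

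I expect the main obstacle to be handling the potential $\phi$ carefully when pieces are recombined: unlike the self-avoiding case where disjoint pieces have no interaction, here bridge-pieces of $\gamma$ that lie in disjoint $x$-slabs still interact only trivially (a vertex cannot be visited by two pieces living in strictly separated $x$-ranges), so $l_v(\gamma) = \sum_j l_v(\gamma_j)$ on the relevant vertices and superadditivity of $\phi$ gives the clean bound $\sigma(\gamma) \le \prod_j \sigma(\gamma_j)$; the subtlety is only at the finitely many ``seam'' hyperplanes where two consecutive pieces share boundary vertices, but since $\Omega$ is finite these seams involve $O(1)$ vertices each and contribute a bounded multiplicative factor per seam, hence $e^{O(\sqrt n)}$ total. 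One must also verify the $\rho$-factors multiply exactly across concatenation (they do, by definition \eqref{weight}). Once these bookkeeping points are settled, the inequality $e^{-C\sqrt n} Z_n \le H_n$ follows by the same induction on $n$ as in Hammersley–Welsh, the only change being that the per-step constants now depend on $\Omega$, $\rho$, and $\phi(2)$ rather than just on $d$.
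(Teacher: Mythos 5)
Your proposal follows the same Hammersley--Welsh route as the paper, and the architecture (trivial upper bound; decompose a half-space walk into bridge pieces with strictly decreasing $x$-spans; pay a combinatorial factor; recombine into a single bridge) is the right one. Two steps, however, are asserted in a form that would not deliver the stated bound. First, the counting: the factor you must pay in $Z(\W_n^+) \le \sum \prod_i H_{n_i,h_i} \le (\#\text{terms})\cdot H_n$ is the number of admissible span sequences $h_1 > h_2 > \cdots > h_k \ge 1$; since these are distinct positive integers with $\sum_i h_i \le Dn$, this is at most the number of partitions of integers up to $Dn$ into distinct parts, i.e. $e^{O(\sqrt n)}$ by Hardy--Ramanujan. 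Your derivation of $k=O(\sqrt n)$ is the right structural fact, but the subsequent claim that the cost is ``$e^{o(n)}=e^{O(\sqrt n)}$'' is false as an identity, and counting choices of $O(\sqrt n)$ cut points among $n$ positions gives only $\binom{n}{C\sqrt n}=e^{O(\sqrt n\log n)}$, which does not yield the $e^{-C\sqrt n}$ of the Proposition; you need the partition-function count, which is what the paper invokes.

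Second, the weight bookkeeping runs in two opposite directions and you state only one of them. For the decomposition step, the pieces produced by the Hammersley--Welsh procedure occupy \emph{nested, overlapping} $x$-ranges (each successive piece doubles back inside the slab of the previous one), so vertices can be shared and the inequality you need there is exactly superadditivity, $\sigma(\gamma)\le\prod_i\sigma(\gamma_i)$ --- your remark that pieces ``cannot share vertices'' is not true at this stage and is fortunately not needed. For the recombination step $\prod_i H_{n_i}\le H_n$ you need the \emph{reverse} inequality $\prod_i\sigma(\gamma_i)\le\sigma(\gamma_1\circ\cdots\circ\gamma_k)$ for bridges, which superadditivity alone does not give; it holds (in fact with equality) because concatenated bridges occupy disjoint slabs and share only the seam vertices, each visited exactly once by the incoming piece, so $\phi(1)=0$ kills the seam contribution. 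Your ``bounded factor per seam'' is really an exact identity, but the direction should be made explicit since the inequality you quote points the wrong way for this step. Finally, do not forget the reduction from $\W_n$ to $\W_n^+$ (cut at the first minimum of $x$, reverse the first piece, prepend one step to the second), which is where the paper picks up the extra factor $n\,e^{O(\sqrt n)}$ and the shift from $H_{n+1}$ back to $H_n$.
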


\begin{proof}
\par The second inequality follows directly from Definition \ref{RB}. 
To obtain the first inequality, observe that any walk of length $n$ can be decomposed into bridges by the following procedure.

\par Start with the walks that lie in the upper half-space after the first step, i.e.
\benn
\gamma \in \W_n^+ := \{\gamma \in \W_n,\, x(\gamma(i))>0 \text{ for any } i>0 \}.
\eenn 
Find the latest step $\alpha_1$ such that $(\gamma(i))_0^{\alpha_1}$ is a bridge and cut the walk $\gamma$ at this point: $\gamma = \gamma_1 \circ \gamma_2$. Note that  the value of $\alpha_1=|\gamma_1|$ can be expressed as follows: 
\be \label{alpha1}
\alpha_1 = \max\Big[i: x(\gamma(i)) = \max_{0 \le i \le n} \big(x(\gamma(i))\big) \Big].
\ee
Due to \eqref{alpha1}, all points of $\gamma$ after the $\alpha_1$-th step have a smaller $x$-coordinate than $\gamma(\alpha_1)$. Hence, $\mathcal \gamma_2$ belongs to the reflection of  $\W_{n-\alpha_1}^+$ and therefore can be decomposed using the same method. We continue applying this procedure until the remaining walk is itself a bridge.
\par We obtain the sequence of bridges related to the initial walk. 
Their widths $h_i = | x(\gamma_i (\alpha_i)) - x(\gamma_i(0))|$ are ordered in a strictly decreasing manner and the sum of their lengths $n_i$ is equal to $n$. Also, the height of the walk is bounded by $h_i \le D n_i$, where $D$ is the size of the set $\Omega$ of all possible steps, i.e.
\be \label{D-nlines}
D= \max_{\tilde{\gamma} \in \Omega}\bigl(x(\tilde{\gamma})\bigr).
\ee 
(Note that $D$ is fixed by the definition of the model.)

\par Let us denote $H_{n,h}$ the weight of the set of all bridges of length $n$ and width $h$. The initial walk is uniquely determined by the set of bridges that composed it. Moreover, any set of bridges with decreasing widths corresponds to a walk in $\W^+ = \cup_{n \in \N} \W^+_n$. 
Thus, we can conclude that
\be \label{walk+2bridges}
Z(\W_n^+) \le
 \sum_{\substack{	(n_i)_{i=1}^k \\ \sum_{i=1}^k n_i = n  \\ h_1 < h_2 <\cdots< h_k }	} \prod_{i=1}^k H_{n_i, h_i} \le 
\sum_{\substack{	(n_i)_{i=1}^k \\ \sum_{i=1}^k n_i = n }	} \prod_{i=1}^k H_{n_i} \le 
 H_n \sum_{\substack{	(n_i)_{i=1}^k \\ \sum_{i=1}^k n_i = n }	} 1.
\ee
The last inequality follows from the fact that the composition of brigdes is a bridge and its weight is a product of weights of the initial bridges.
\par It is known that the number of partitions of the integer $n$ is of the order $e^{\tilde{C} \sqrt{n}}$ for some constant $\tilde{C}$ \cite{R18}. 
This, together with \eqref{walk+2bridges}, implies
\be
Z(\W_n^+) \le e^{\tilde{C} \sqrt{n}} H_n.
\ee
\par To extend the proof from $\W_n^+$ to $\W_n$, we cut the walk $\gamma \in \W_n$ at the first point with minimal $x$-coordinate, i.e write $\gamma = \gamma_1 \circ \gamma_2$, where
\be
|\gamma_1| = \alpha_0 =\min\Big[i: x(\gamma(i)) = \min_{0 \le i \le n} \big(x(\gamma(i))\big) \Big].
\ee
Then, the walk $\overline{\gamma_1} = (\gamma_1(\alpha_0-i))_{i=0}^{\alpha_0}$ is a translation of a walk in $\W_{\alpha_0}^+$.
\par The rest of the walk is allowed to visit the initial hyperplane $\{x=0\}$ more than once so to make the decomposition into bridges possible we should add one step at the beginning of the walk. By the symmetry of the set of all possible steps, there exists at least one $\tilde {\gamma} \in \W^+_1$ and 
$\tilde{\gamma} \circ \gamma_2 \in \W_{n-\alpha_0+1}^+$.
\par Both walks $\overline{\gamma_1}$ and $\tilde{\gamma} \circ \gamma_2$ admit the decomposition into bridges as above so the final bound on $Z_n$ is
\be \label{walks2walks+}
Z_n \le \sum_{\alpha=0}^n Z(\W_\alpha^+)  Z(\W_{(n-\alpha+1)}^+) 
\le  e^{2 \tilde{C}  \sqrt{n}} \sum_{\alpha=0}^n H_\alpha H_{n-\alpha+1} \le e^{C \sqrt{n}} H_{n+1}.
\ee
Since $H_{n+1} \ge c H_n$, the result follows.

\end{proof}
%

The previous proposition has the following immediate corollary.
\begin{cor}
\par We have $\lambda_{\text{bridge}} = \lim_{n \to \infty} \tfrac{1}{n} \log H_n = \lambda_0$.
\end{cor}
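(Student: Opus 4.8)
The plan is to sandwich $\frac1n\log H_n$ between two quantities that both converge to $\lambda_0$ and apply the squeeze. From Proposition~\ref{bcdecomp} we already have, for all $n$,
\benn
e^{-C\sqrt n}\,Z_n \le H_n \le Z_n,
\eenn
so taking logarithms, dividing by $n$, and using $\frac{C\sqrt n}{n}=\frac{C}{\sqrt n}\to 0$ gives
\benn
\tfrac1n\log Z_n - \tfrac{C}{\sqrt n} \le \tfrac1n\log H_n \le \tfrac1n\log Z_n.
\eenn
Since $\frac1n\log Z_n\to\lambda_0$ by Theorem~\ref{mu_c}, both the left and right sides tend to $\lambda_0$, hence $\frac1n\log H_n\to\lambda_0$ as well. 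The only thing that needs a word of justification is that the limit $\lambda_{\text{bridge}}=\lim\frac1n\log H_n$ exists at all, rather than just having the right value along a subsequence: but this follows from the same sandwich, since the squeeze theorem yields both existence and the value simultaneously.

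Alternatively — and perhaps more in the spirit of the self-avoiding walk literature — one could first establish the existence of $\lambda_{\text{bridge}}$ independently via super-multiplicativity of $H_n$: the concatenation of two bridges is again a bridge, and the subadditivity argument of Theorem~\ref{mu_c} applied to $l_x$ shows $\sigma(\gamma_1\circ\gamma_2)\le\sigma(\gamma_1)\sigma(\gamma_2)$, giving $H_{n+m}\le H_n H_m$, so Fekete's lemma produces the limit $\lambda_{\text{bridge}}$. Then the two-sided bound above pins down its value as $\lambda_0$. Either route works; I would take the first, since the proposition already does all the work.

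There is no real obstacle here — this is a routine corollary. The one point to be careful about is purely bookkeeping: the bound in Proposition~\ref{bcdecomp} compares $H_n$ with $Z_n$ for the same index $n$ (the proof absorbed the index shift from $H_{n+1}$ into the constant $C$ using $H_{n+1}\ge cH_n$), so no off-by-one adjustment is needed when passing to the limit. I would simply write out the three-line squeeze and conclude.
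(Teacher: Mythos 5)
Your first (squeeze) argument is exactly what the paper intends — the corollary is stated there as an ``immediate'' consequence of Proposition \ref{bcdecomp}, and your three-line sandwich using $e^{-C\sqrt n}Z_n\le H_n\le Z_n$ together with Theorem \ref{mu_c} is the correct way to spell it out, including the observation that the squeeze delivers existence of the limit and its value at once. (Your ``alternative'' route is shakier: $\sigma(\gamma_1\circ\gamma_2)\le\sigma(\gamma_1)\sigma(\gamma_2)$ does not yield $H_{n+m}\le H_nH_m$, because cutting a bridge of length $n+m$ at step $n$ need not produce two bridges — but since you explicitly take the first route, this does not affect your proof.)
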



\begin{cor} \label{RBradius}
The series $H (\lambda) = \sum_{n=0}^{\infty} H_n e^{-\lambda n}$ converges for any  $\lambda > \lambda_0$. Moreover,
\be \label{sumbridges}
\sum_{n=0}^{\infty} H_n e^{-\lambda_0 n} = \infty.
\ee
\end{cor}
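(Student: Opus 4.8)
The plan is to handle the two claims separately, the convergence being essentially free and the divergence being the substance of the corollary. For $\lambda > \lambda_0$: Proposition \ref{bcdecomp} gives $H_n \le Z_n$, hence $H(\lambda) \le W(\lambda)$, and $W(\lambda)$ converges for $\lambda > \lambda_0$ by Cauchy--Hadamard. So everything reduces to proving $H(\lambda_0)=\infty$, and for this I would revisit the bridge decomposition used in the proof of Proposition \ref{bcdecomp} and turn it into a statement about generating functions evaluated at $\lambda_0$.

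Recall $H_{n,h}$ denotes the weight of the bridges of length $n$ and width $h$, and set $\psi(h) := \sum_{n\ge 0} H_{n,h}\, e^{-\lambda_0 n}$; since every bridge of positive length has a single well-defined width $\ge 1$, one has $\sum_{h\ge 1}\psi(h) \le \sum_{n\ge 0} H_n e^{-\lambda_0 n} = H(\lambda_0)$. Starting from inequality \eqref{walk+2bridges}, which bounds $Z(\W_n^+)$ by a sum of products $\prod_{i=1}^k H_{n_i,h_i}$ over compositions $(n_i)$ of $n$ with pairwise distinct widths $h_i$, I would multiply by $e^{-\lambda_0 n}$, sum over $n$, and interchange the (non-negative) sums, using $e^{-\lambda_0 n} = \prod_i e^{-\lambda_0 n_i}$. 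This gives
\[
\sum_{n\ge 0} Z(\W_n^+)\, e^{-\lambda_0 n}
\;\le\; \sum_{k\ge 0}\ \sum_{h_1 < \cdots < h_k}\ \prod_{i=1}^k \psi(h_i)
\;=\; \prod_{h\ge 1}\bigl(1+\psi(h)\bigr)
\;\le\; \exp\Bigl(\sum_{h\ge 1}\psi(h)\Bigr)
\;\le\; e^{\,H(\lambda_0)},
\]
the middle identity being the expansion of the infinite product over finite subsets of $\{1,2,\dots\}$.

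Finally I would feed this into \eqref{walks2walks+}, namely $Z_n \le \sum_{\alpha=0}^n Z(\W_\alpha^+)\,Z(\W_{n-\alpha+1}^+)$: writing $e^{-\lambda_0 n} = e^{\lambda_0}\,e^{-\lambda_0\alpha}\,e^{-\lambda_0(n-\alpha+1)}$ and summing over $n$ yields $\sum_{n\ge 0} Z_n e^{-\lambda_0 n} \le e^{\lambda_0}\bigl(\sum_{m\ge 0} Z(\W_m^+)e^{-\lambda_0 m}\bigr)^2 \le e^{\lambda_0}\, e^{\,2H(\lambda_0)}$. By \eqref{sumwalks} the left-hand side is infinite, and since $\lambda_0$ is finite this forces $H(\lambda_0)=\infty$, which is the claim. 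The one delicate point is the passage $\sum_n Z(\W_n^+)e^{-\lambda_0 n} = \prod_{h}(1+\psi(h))$: it is exactly where the strict monotonicity of the widths in the bridge decomposition is used, and it is what makes the argument run where softer information on $(H_n)$ — e.g. the super-multiplicativity $H_{n+m}\ge H_n H_m$ coming from concatenating bridges — would not suffice, since a super-multiplicative sequence such as $e^{-\sqrt n}$ can still be summable. The remaining manipulations (interchanging non-negative sums, the harmless length-$0$ and length-$1$ boundary terms) are routine.
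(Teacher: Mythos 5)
Your proof is correct and follows essentially the same route as the paper: both parts rest on the two inequalities \eqref{walk+2bridges} and \eqref{walks2walks+}, with the half-space generating function bounded by $\prod_h(1+\psi(h))\le\exp(H(\lambda_0))$ via the distinct-widths decomposition, and the divergence \eqref{sumwalks} then forcing $H(\lambda_0)=\infty$. The only cosmetic difference is that you evaluate the inequalities directly at $\lambda_0$ (legitimate, since all terms are non-negative) rather than letting $\lambda\searrow\lambda_0$ as the paper does.
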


\begin{proof}
\par The first part of the statement follows directly from the previous corollary. To prove the second part, one should use the intermediate steps of the proof of Proposition \ref{bcdecomp}. We can use the first inequality of \eqref{walks2walks+} to bound the generating function $W(\lambda)$ as follows:
\bal \label{cntocn+}
\sum_{n=0}^{\infty} Z_n e^{-\lambda n}  
&\le \nonumber \sum_{n=0}^{\infty} \sum_{\alpha=0}^{n}  e^{\lambda n} 
\left( Z(W^+_{\alpha}) e^{-\lambda \alpha} \right)	\left( Z(W^+_{n-\alpha+1}) e^{-\lambda (n-\alpha+1)} \right)
\\ &=  e^{\lambda }  \left( \sum_{n=0}^{\infty} Z(W^+_{n}) e^{-\lambda n} \right)^2\!\!.
\end{align}
\par Due to the first inequality of \eqref{walk+2bridges}, this sum can be rewritten as follows:
\bal  \nonumber
\sum_{n=0}^{\infty} Z(W^+_{n}) e^{-\lambda n}  
&\le \sum_{\substack{	(n_i)_{i=1}^k \\ h_1 < h_2 <\cdots< h_k }	} \prod_{i=1}^k H_{n_i, h_i} e^{-\lambda n_i}  \\ \nonumber 
&=\prod_{h_1 < h_2 <\cdots} \biggl( 1+ \sum_{n=0}^{\infty}  H_{n, h_i} e^{-\lambda n} \biggr) \\ \nonumber
&= \prod_{h_1 < h_2 <\cdots} \exp \biggl(\Bigl( \sum_{n=0}^{\infty}  e^{-\lambda n}  H_{n, h_i} \Bigr)\biggr) \\\nonumber
&= \exp \biggl(\Bigl( \sum_{n=0}^{\infty} (\sum_{h=0}^{\infty}H_{n, h})\Bigr)  e^{-\lambda n} \biggr) \\
&= \exp \biggl(\Bigl( \sum_{n=0}^{\infty} H_n  e^{-\lambda n} \Bigr)\biggr). \label{cn+tobn}
\end{align}
\par 
The divergence of the sum $\sum_{n=0}^{\infty} Z_n  e^{-\lambda n} $ as $\lambda$ tends to $\lambda_0$ (by \eqref{sumwalks}) implies the divergence of the sum $\sum_{n=0}^{\infty} { Z(W^+_{n})} e^{-\lambda n} $ which itself together with \eqref{cntocn+} and \eqref{cn+tobn} implies that $\sum_{n=0}^{\infty} {H_n}  e^{-\lambda_0 n}$ diverges. 
%

\end{proof}

\subsection{Irreducible bridges}

Suppose that $\gamma$ is a bridge of length $n$. 
Then, an integer $i: \in [[ 1, n-1 ]]$ 
is called a {\it{renewal time}} of $\gamma$ if $x(\gamma(i))<x( \gamma(k))$ for any $k>i$ and $x(\gamma(i)) \ge x(\gamma(k))$ for any $k<i$. We sometimes say that $\gamma(i)$ is a \emph{renewal point} if $i$ is a renewal time.
The increasing sequence of all renewal times of the bridge $\gamma$ will be denoted by $R_\gamma$.
The bridge $\gamma$ is called {\it{irreducible}} if $R_\gamma = \emptyset$. The set of all irreducible bridges of arbitrary lengths is denoted $\iB$.

\par A renewal time $r$ splits $\gamma$ into two bridges $(\gamma(i))_{i=0}^{r}$ and $(\gamma(i))_{i=r}^{n}$. From this point of view, an irreducible bridge is a bridge that does not admit any decomposition into shorter bridges.

\begin{theorem} [Kesten's lemma for repelling polymers] \label{keslem}
\be \label{keslemeq}
\sum_{\gamma \in \iB}  e^{ - \lambda_0 |\gamma|}\sigma(\gamma) = 1.
\ee
\end{theorem}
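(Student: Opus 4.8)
The plan is to establish \eqref{keslemeq} via a generating-function (renewal) identity relating arbitrary bridges to irreducible ones, exactly as in the self-avoiding case, but carefully tracking the extra factors coming from the potential $\phi$. Write $B(\lambda) = \sum_{n\ge 1} H_n e^{-\lambda n}$ and $I(\lambda) = \sum_{\gamma \in \iB} e^{-\lambda|\gamma|}\sigma(\gamma)$. The first step is to check that a bridge $\gamma$ decomposes uniquely into a concatenation $\gamma = \gamma_1 \circ \gamma_2 \circ \cdots \circ \gamma_k$ of irreducible bridges, where the cut points are precisely the renewal times of $\gamma$. This is the content of the paragraph preceding the theorem; I would just note that the renewal times of $\gamma$ are exactly the points where the $x$-coordinate achieves a running strict maximum that is never revisited, so the decomposition and its inverse (concatenation of any finite string of irreducible bridges) are mutually inverse bijections between $\B = \cup_n \B_n$ and finite sequences from $\iB$.

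The second and crucial step is the multiplicativity of the weight under this decomposition. Unlike the self-avoiding case, $\sigma$ is not exactly multiplicative: subadditivity of $\phi$ only gives $\sigma(\gamma) \le \sigma(\gamma_1)\cdots\sigma(\gamma_k)$ (this is the inequality already used in the proof of Theorem~\ref{mu_c}). The point I would stress is that when $\gamma_1,\dots,\gamma_k$ are glued at \emph{renewal points}, the blocks occupy disjoint slabs $\{x(\gamma_1(0)) \le x < x(\gamma_1(\alpha_1))\}$, etc., except possibly for the single hyperplane through each cut point; and the definition of a renewal time forces $x(\gamma(r)) < x(\gamma(k))$ for $k>r$ and $x(\gamma(r)) \ge x(\gamma(k))$ for $k<r$, so the renewal vertex $\gamma(r)$ is visited by $\gamma$ only at time $r$ — hence its occupation count is $1$ in $\gamma$ and splits as $1 = 1$ (endpoint of $\gamma_1$) but $0$ in the interior of the other block, so no vertex is shared between two blocks. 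Consequently $l_v(\gamma) = \sum_i l_v(\gamma_i)$ for \emph{every} $v$ (the shared cut-vertices contribute to exactly one block because the other block's initial vertex has $l=0$ relabeled away after translation), and since $\phi(0)=0$ we get exact equality $\sigma(\gamma) = \prod_i \sigma(\gamma_i)$, together with $e^{-\lambda|\gamma|} = \prod_i e^{-\lambda|\gamma_i|}$. This yields the renewal identity
\be \nonumber
B(\lambda) = \sum_{k\ge 1} I(\lambda)^k = \frac{I(\lambda)}{1 - I(\lambda)} \qquad \text{for } \lambda > \lambda_0,
\ee
valid wherever $I(\lambda) < 1$, and in particular $I(\lambda) = B(\lambda)/(1+B(\lambda)) < 1$ there.

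The third step extracts the value at $\lambda_0$. By Corollary~\ref{RBradius}, $B(\lambda) \to \infty$ as $\lambda \downarrow \lambda_0$ (this is \eqref{sumbridges}), so from $I(\lambda) = B(\lambda)/(1+B(\lambda))$ we get $I(\lambda) \to 1$ as $\lambda \downarrow \lambda_0$. On the other hand $I(\lambda)$ is non-increasing in $\lambda$ and, by monotone convergence, $I(\lambda) \uparrow I(\lambda_0) = \sum_{\gamma\in\iB} e^{-\lambda_0|\gamma|}\sigma(\gamma)$ as $\lambda \downarrow \lambda_0$; hence $I(\lambda_0) = 1$, which is \eqref{keslemeq}. (One should also record that $I(\lambda) \le 1$ for all $\lambda \ge \lambda_0$, so the series at $\lambda_0$ converges and the manipulations are legitimate; this follows since $B(\lambda)<\infty$ for $\lambda>\lambda_0$ forces $I(\lambda)<1$ there, and then monotonicity gives $I(\lambda_0)\le 1$ a priori, so the limit argument is not circular.)

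The main obstacle is the second step: verifying that concatenation at renewal points makes $\sigma$ \emph{exactly} multiplicative, i.e.\ that the slab structure enforced by the renewal-time condition prevents any vertex from being visited by two different irreducible blocks. The inequality direction is free from subadditivity, but the reverse inequality — which is what makes $I(\lambda_0)$ exactly $1$ rather than merely $\le 1$ — genuinely uses the geometry of renewal points, and this is where the argument must be written with care. Everything else (the bijection, Fekete/Cauchy--Hadamard bookkeeping, and the limit $\lambda\downarrow\lambda_0$) is routine given the results already established.
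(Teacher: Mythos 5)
Your proposal is correct and follows essentially the same route as the paper: decompose a bridge at its renewal times into irreducible bridges, use exact multiplicativity of $\sigma$ to write $H(\lambda)=\bigl(1-\sum_{\gamma\in\iB}\sigma(\gamma)e^{-\lambda|\gamma|}\bigr)^{-1}$, and let $\lambda\searrow\lambda_0$ using the divergence \eqref{sumbridges}; your explicit monotone-convergence step at the end is in fact slightly more careful than the paper's. One small correction to your multiplicativity argument: a cut vertex $\gamma(r)$ may be visited by $\gamma$ \emph{before} time $r$ (the renewal condition only gives $x(\gamma(k))\le x(\gamma(r))$ for $k<r$), and the two adjacent blocks together count it one extra time, so at such a vertex $\sum_i l_v(\gamma_i)=l_v(\gamma)+1$; exact multiplicativity nevertheless holds because $\phi(1)=0$ absorbs that extra visit.
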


\begin{proof}
\par Each bridge can be seen as a concatenation of $K= |R_\gamma|+1$ irreducible bridges $(\gamma_k)_{k=1}^{K}$. These bridges have no common point except the points $\gamma(R_\gamma(i))$ where one bridge begins and another one ends. Therefore, according to the definition of $\sigma$, we obtain that
\be
\sigma(\gamma) = \prod_{k=1}^{K} \sigma(\gamma_k).
\ee
Thus, we can rewrite the generating function of bridges  in the following way:
\be \label{keslem1}
H (\lambda) =
\sum_{K=0}^\infty \,\,\, \prod_{k=1}^{K} \sum_{\gamma_k \in \iB}\sigma(\gamma_k) e^{-\lambda |\gamma_k|} = 
\frac{1}{1- \sum_{\gamma \in \iB}\sigma(\gamma) e^{-\lambda |\gamma|}}.
\ee
\par From Corollary \ref{RBradius} we obtain that $H (\lambda) $ exists for $\lambda > \lambda_0$ and converges to infinity when $\lambda  \searrow \lambda_0$. Comparing this result with \eqref{keslem1} gives that 
\benn
\lim_{\lambda  \searrow \lambda_0}
\sum_{\gamma_k \in \iB}\sigma(\gamma_k) e^{-\lambda |\gamma_k|} =1.
\eenn 
This equality implies the result of the theorem.
\end{proof}

\par We define a probability measure on the set of irreducible bridges based on Theorem \ref{keslem}. For all $\gamma \in \iB$,
\be
\p_{\iB}(\gamma) = \sigma(\gamma) e^{-\lambda_0 |\gamma|}.
\ee

\par Also, we define the probability measure $\p_{\iB}^{\otimes \N}$ on the set $\B_\infty^+$ of semi-infinite random walks $\gamma: \, \Z_+ \to \left(\Z_+ \right)  \times \Z^{d-1}$ that begin at zero and lie in the upper-half space after the first step by considering a concatenation of irreducible bridges chosen independently according to the probability distribution $\p_{\iB}$, i.e. for any $\gamma \in \B_\infty^+$ and $\tilde{\gamma} \in \iB$,
\be
\p_{\iB}^{\otimes \N}  (\exists \gamma_1 \in \B_\infty^+ \,: \, \gamma = \tilde{\gamma} \circ \gamma_1) = \p_{\iB}(\tilde{\gamma}).
\ee
\par If such $\gamma_1$ exists, then the value $r_1(\gamma) = |\tilde{\gamma}|$ is called the first renewal time of $\gamma$, and all other renewal times are defined by $r_{k+1} (\gamma) = r_k(\gamma_1)$. Also we set $r_0=0$. The sequence of all renewal times $R_\gamma = \left(r_k(\gamma)\right)$ is infinite almost surely.

\par Finite random bridges are related to this model in the following way.
\begin{lemma} \label{infcondfin}
The distribution $\p_{\wSAW_n}$ is equal to 
${\p_{\iB}^{\otimes \N} \big(\cdot \big| \gamma(n) \in R_\gamma \big)}$.
\end{lemma}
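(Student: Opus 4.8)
The claim is that $\p_{\wSAW_n}$ coincides with the conditional law of the semi-infinite concatenation of i.i.d.\ irreducible bridges, conditioned on the event $\{n \in R_\gamma\}$ that $n$ is a renewal time. The strategy is to compute the right-hand side explicitly and check it equals $\sigma(\gamma_0)/Z_n$ for each $\gamma_0 \in \W_n$. First I would note that under $\p_{\iB}^{\otimes \N}$, the event $\{n \in R_\gamma\}$ forces the initial segment $(\gamma(i))_{i=0}^n$ to be a \emph{bridge} of length $n$ (and conversely, every bridge of length $n$ appears as such an initial segment with positive probability). So for $\gamma_0 \in \W_n$ that is not a bridge, both sides vanish, and we only need to treat $\gamma_0 \in \B_n$.

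For $\gamma_0 \in \B_n$, decompose it into its irreducible pieces $\gamma_0 = \gamma_1 \circ \cdots \circ \gamma_K$ along its renewal times $R_{\gamma_0}$, exactly as in the proof of Theorem~\ref{keslem}. The probability that the first $K$ irreducible bridges drawn are precisely $\gamma_1,\dots,\gamma_K$ is $\prod_{k=1}^K \p_{\iB}(\gamma_k) = \prod_{k=1}^K \sigma(\gamma_k) e^{-\lambda_0 |\gamma_k|}$, which by multiplicativity of $\sigma$ over renewal-point concatenations and additivity of lengths equals $\sigma(\gamma_0) e^{-\lambda_0 n}$. Hence
\be \nonumber
\p_{\iB}^{\otimes \N}\bigl((\gamma(i))_{i=0}^n = \gamma_0,\ n \in R_\gamma\bigr) = \sigma(\gamma_0)\, e^{-\lambda_0 n}.
\ee
Summing over $\gamma_0 \in \B_n$ gives $\p_{\iB}^{\otimes \N}(n \in R_\gamma) = H_n e^{-\lambda_0 n}$. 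Dividing, the conditional probability of $\gamma_0$ is $\sigma(\gamma_0)/H_n$ for $\gamma_0 \in \B_n$ and $0$ otherwise — which is exactly $\p_{\B_n}$, the self-repelling polymer measure restricted to (and renormalized on) bridges.

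The one genuine subtlety — and the main point to get right — is matching this against $\p_{\wSAW_n}$, whose normalization is $Z_n$, not $H_n$. So one should either: (i) interpret the statement as an identity of measures on $\B_\infty^+$ (i.e.\ the conditioned object lives on semi-infinite walks in the upper half-space, and its projection to the first $n$ steps is the bridge-conditioned polymer measure, which is the correct reading of ``$\p_{\wSAW_n}$'' in this half-space context); or (ii) observe that conditioning on $n \in R_\gamma$ already restricts to bridges, so the relevant reference measure on length-$n$ walks is the polymer measure conditioned to be a bridge, and with that reading the two normalizations are both $H_n$ and the identity is literal. I would state which reading is intended, then the verification above closes it. The only thing to double-check carefully is that the renewal-decomposition map $\gamma_0 \mapsto (\gamma_1,\dots,\gamma_K)$ is a bijection between $\B_n$ and finite ordered tuples of irreducible bridges with lengths summing to $n$, and that $\{n \in R_\gamma\}$ under $\p_{\iB}^{\otimes \N}$ is precisely the disjoint union over such tuples — this is immediate from the definition of renewal times but deserves a sentence.
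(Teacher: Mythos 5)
Your proof is correct and follows essentially the same route as the paper's: decompose a bridge of length $n$ into its irreducible pieces, use multiplicativity of $\sigma$ at renewal points to get $\p_{\iB}^{\otimes \N}\big((\gamma(i))_{i=0}^n = \gamma_0,\ \gamma(n)\in R_\gamma\big) = \sigma(\gamma_0)e^{-\lambda_0 n}$, sum over $\B_n$ to identify the normalization $H_n e^{-\lambda_0 n}$, and divide. The normalization subtlety you flag ($Z_n$ versus $H_n$) is a genuine imprecision in the statement as written, and your reading (ii) is exactly how the paper's own proof resolves it: its final display computes $\p_{\B_n}(A)$, i.e.\ the polymer measure conditioned to be a bridge, not $\p_{\wSAW_n}$ on all of $\W_n$.
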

\begin{proof}
To prove this lemma, we use almost the same decomposition as in Lemma \ref{keslem}. We have
\begin{align*}
H_n 
&= \sum_{\gamma \in \B_n} \sigma(\gamma) = 
e^{\lambda_0 n} \sum_{K=0}^\infty 
\sum_{\substack{	(\gamma_i \in \iB)_{i=1}^K \\ \sum_{i=1}^K |\gamma_i| = n }	}
\prod_{i=1}^K \sigma(\gamma_i) e^{-\lambda_0 |\gamma_i|} \\
&=\p_{\iRB}^{\otimes \N}\big( \exists K \in \N: \gamma(n) \text{ is a $K$-th renewal point of } \gamma\big) \\
&= \p_{\iRB}^{\otimes \N}\big(\gamma(n) \in R_{\gamma} \big).
\end{align*}
For any event $A$,
we can write the following equality:
\be \nonumber
\p_{\B_n} (A) = \frac{1}{H_n} \sum_{\gamma \in \B_n} \sigma(\gamma) \one_{\gamma \in A} = 
\frac{\p_{\iB}^{\otimes \N} \big((\gamma(i))_{i=0}^n \in A, \,\gamma(n) \in R_{\gamma}\big) }
{\p_{\iB}^{\otimes \N}\big(\gamma(n) \in R_{\gamma} \big)} = 
\p_{\iB}^{\otimes \N} \big(A \,\big| \gamma(n) \in R_\gamma \big).
\ee
\end{proof}

\par The probability measure $\p_{\iB}^{\otimes \N}$ can be extended to bi-infinite bridges $\gamma: \, \Z \to \Z^d$ with the restriction that $\gamma(0)=0$ and is a renewal point, i.e.
\be \nonumber
x(\gamma(-n)) \le 0 \text{ and } x(\gamma(n))>0, \quad \forall n \in \N.
\ee
\par For this type of walks we define the two-sided sequence of renewal points $R_\gamma =(r_k)_{k=-\infty}^{\infty}$ similarly as before.

\par For any renewal point $r_k \in R_\gamma$, define the operation of shift $\tau(\gamma)$ that sets $\gamma(r_1)$ to zero: $\tau(\gamma) = (\gamma(i-r_1))_{i=-\infty}^{\infty}$. By construction, the probability measure $\p_{\iB}^{\otimes \Z}$ is invariant under $\tau$.
\begin{lemma} \label{tauerg}
$\p_{\iB}^{\otimes \Z}$ is ergodic under $\tau$.
\end{lemma}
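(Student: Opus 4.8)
The plan is to recognise $\bigl(\p_{\iB}^{\otimes\Z},\tau\bigr)$ as an isomorphic copy of a two-sided Bernoulli shift, whose ergodicity is classical. By construction, a $\p_{\iB}^{\otimes\Z}$-random bi-infinite bridge $\gamma$ is obtained by concatenating a family $(\gamma_k)_{k\in\Z}$ of independent irreducible bridges, each of law $\p_{\iB}$, so that the junction between $\gamma_0$ and $\gamma_1$ sits at the origin; write $\Phi$ for this concatenation map. Then $\p_{\iB}^{\otimes\Z}$ is, by definition, the image under $\Phi$ of the i.i.d. product measure, which we denote $\mu$, on the sequence space $\iB^{\Z}$.

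The first step is to check that $\Phi$ is a bijection from $\iB^{\Z}$ onto the set of bi-infinite bridges for which the origin is a renewal point, with measurable inverse. Bijectivity amounts to the fact that such a bridge decomposes in a unique way into irreducible bridges along its two-sided renewal sequence $R_\gamma$ — exactly the decomposition already used in the proof of Theorem~\ref{keslem}, carried over verbatim to the bi-infinite setting. Measurability of $\Phi^{-1}$ holds because for each $i$ the event $\{i\in R_\gamma\}$ is a countable intersection of the cylinder events $\{x(\gamma(i))<x(\gamma(k))\}$ for $k>i$ and $\{x(\gamma(i))\ge x(\gamma(k))\}$ for $k<i$; hence $R_\gamma$, and with it the list of blocks, is a measurable function of $\gamma$.

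The second step is to verify that $\Phi$ intertwines the coordinate shift $S\colon(\gamma_k)_{k\in\Z}\mapsto(\gamma_{k+1})_{k\in\Z}$ on $\iB^{\Z}$ with the renewal shift $\tau$. Advancing the block sequence by one unit moves the origin from the junction between $\gamma_0$ and $\gamma_1$ to the junction between $\gamma_1$ and $\gamma_2$, i.e. to the point $\gamma(r_1)$; this is precisely the action of $\tau$, so $\tau\circ\Phi=\Phi\circ S$. Since $\Phi$ is a measure-preserving bi-measurable bijection, it is an isomorphism of dynamical systems carrying $(\mu,S)$ to $\bigl(\p_{\iB}^{\otimes\Z},\tau\bigr)$, so one is ergodic exactly when the other is.

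The last step is to invoke that $(\mu,S)$ is the two-sided Bernoulli shift on the countable alphabet $\iB$ with i.i.d. coordinates of law $\p_{\iB}$, and hence ergodic: any $S$-invariant set lies in the tail $\sigma$-algebra $\bigcap_{N}\sigma(\gamma_k:\,|k|\ge N)$, which is trivial by Kolmogorov's zero--one law (equivalently, approximating invariant sets by cylinders and using independence of far-apart coordinates shows that $S$ is mixing). Transporting this along $\Phi$ yields the ergodicity of $\tau$. The one genuinely delicate point is the set-up of the isomorphism $\Phi$ — pinning down the indexing convention for the blocks and checking that it sends $S$ to $\tau$ rather than to $\tau^{-1}$ or to some other power — everything beyond that being a standard property of i.i.d. sequences.
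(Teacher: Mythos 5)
Your proof is correct, and its mathematical engine is the same as the paper's: the independence of the irreducible-bridge blocks under $\p_{\iB}^{\otimes \Z}$. The packaging, however, differs. The paper works directly on the walk space: it approximates a $\tau$-invariant event $A$ by an event $A_M$ depending on finitely many coordinates $(\gamma(i))_{i=-M}^{M}$, uses $r_{2M}\ge 2M$ to guarantee that $A_M$ and $\tau^{4M}(A_M)$ are functions of disjoint, hence independent, collections of irreducible blocks, and concludes $\p(A)=\p(A)^2$ --- precisely the mixing computation you sketch in parentheses. You instead make the conjugacy with the two-sided Bernoulli shift over the countable alphabet $\iB$ explicit and cite its classical ergodicity via the Kolmogorov zero--one law. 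Your route must set up the bi-measurable bijection $\Phi$ and check the intertwining (and, as you note, fix the indexing so that the coordinate shift goes to $\tau$ rather than $\tau^{-1}$ --- though either would do, since ergodicity under $\tau$ and under $\tau^{-1}$ are equivalent); in exchange it sidesteps the slightly delicate step in the paper of passing between events measurable in finitely many walk coordinates and events measurable in finitely many blocks. Two minor imprecisions, neither of which is a gap: the image of $\Phi$ is not the full set of bi-infinite bridges with $0\in R_\gamma$ but only those whose renewal sequence is infinite in both directions (a null-set issue, since $\p_{\iB}^{\otimes\Z}$ is by definition the push-forward of the product measure); and an $S$-invariant set lies in the tail $\sigma$-algebra only up to a null set, which is exactly what the zero--one law needs and is what your parenthetical approximation argument establishes rigorously.
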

\begin{proof}
\par Let $A$ be a shift-invariant measurable event. 
Then, for any choice of $\e>0$, we can pick a positive integer $M$ and an event $A_M$ depending only on $(\gamma(i))_{i=-M}^M$ such that $\p_{\iRB}^{\otimes \Z}(A \bigtriangleup A_M) \le \e$. 
We can express $\p_{\iRB}^{\otimes \Z} (A)$ as $\p_{\iRB}^{\otimes \Z}(A \cap A) = \p_{\iRB}^{\otimes \Z}(A \cap \tau^{4M}(A))$. 
This probability is bounded in the following way:
\bal \nonumber
\! \left| \p_{\iRB}^{\otimes \Z}(\msp A \msp \cap \msp \tau^{4M}\!(A))\msp \! - 
\msp \!\p_{\iRB}^{\otimes \Z}(\msp A_M \msp \cap \msp \tau^{4M}\!(A_M)) \right| 
\! &\le \p_{\iRB}^{\otimes \Z} \! \left((A \cap \tau^{4M}(A))\!\bigtriangleup\! (A_M \cap \tau^{4M}(A_M))\msp \right) \\ \nonumber
   &\le \p_{\iRB}^{\otimes \Z}	(\msp A \!\bigtriangleup \!A_M)\! 
\msp +\p_{\iRB}^{\otimes \Z}	(\msp \tau^{4M}\msp \msp (A)\!\bigtriangleup \!\tau^{4M}\msp \msp (\msp \msp A_M)\msp ) \\ 
   &\le  2 \e. \qquad \label{tauerg1}
\end{align}
All irreducible pieces of $\gamma$ are sampled independently and $r_{2M} \ge 2M$ so the events $A_M$ and  $\tau^{4M}(A_M)$  depend on the independent collections of irreducible bridges $(\gamma_m)_{m=-M}^{M}$ and $(\gamma_m)_{m=3M}^{5M}$ so we can write the following estimation:

\begin{align} \nonumber
\left| \p_{\iRB}^{\otimes \Z}(A_M \! \cap\msp  \tau^{4M}\!(A_M) \!-\!  \p_{\iRB}^{\otimes \Z}(A)^2\msp)\right| 
&\! \le \msp \left|  \p_{\iRB}^{\otimes \Z}(A)^2 \!+ \!2\e   \p_{\iRB}^{\otimes \Z}(A) \! + \!\e^2 \!-  \p_{\iRB}^{\otimes \Z}(A)^2)\right| \\
&\!  \le  3\e .\label{tauerg2}
\end{align}
The bounds \eqref{tauerg1} and \eqref{tauerg2} give that
\be
|\p_{\iRB}^{\otimes \Z}(A)-\p_{\iRB}^{\otimes \Z}(A)^2 | \le 5\e
\ee
which is true for any choice of $\e$. This implies that $\p_{\iRB}^{\otimes \Z}(A) \in \{0,1\}$.
\end{proof}


\section 
{Ballistic assumption} \label{secBA}
Both in this section and in the next one we prove Theorem \ref{MT} by disproving its contrary. The slight modification of this contrary will be called the Ballistic Assumption. Note that these modifications do not change the validity of the statement.


\begin{ass}[Ballistic assumption] \label{BAprop}
There exists $v >0$ such that:
\be \label{BA}
\limsup_{n \to \infty} \tfrac{1}{n} \log \p_{\wSAB_n}(x(\gamma(n))> v n) = 0,
\ee
where $\p_{\wSAB_n}$ is a measure defined as in Definition \ref{def} but on the set of bridges of length $n$.
\end{ass}

\par We work not with the whole set $Z_n$ of self-repelling polymers of length $n$, but only a subset $H_n$ of self-repelling bridges. 
 To justify this change we first need to observe that for self-avoiding walks a linear lower bound on the distance from the origin implies a linear lower bound on at least one coordinate of the endpoint. 
 The second fact confirming the identity of these two assumptions is that 
\be
e^{-C\sqrt{n}}\p _{\wSAB_n} ( x(\gamma(n))> v n) 
\le \p _{\wSAW_n} (x(\gamma(n))> v n) 
\le e^{C\sqrt{n}}\p _{\wSAB_n} ( x(\gamma(n))> v n).
\ee
This inequalities follow from Theorem \ref{bcdecomp} and the same decomposition as in Theorem \ref{bcdecomp} applied to the sets $\{ \gamma \in \wSAB_n,  x(\gamma(n))> v n)\}$ and $\{ \gamma \in \wSAW_n,  x(\gamma(n))> v n)\}$.
\par Let us investigate some consequences of Assumption \ref{BAprop}.
Let us work only with bridges wide enough to be involved in \eqref{BA}: $\RB_{n,v} = \big\{ \gamma \in \RB_n: x(\gamma_n)> v n \big\} $.
The Ballistic Assumption puts the following restriction on the number of renewal points.

\begin{theorem} \label{T1}
If \eqref{BA} holds, then for any increasing sequence of positive integers $(u_n)$, there exists $\delta >0$ such that
\be \label{T1eq}
\limsup_{n \to \infty} \tfrac{1}{u_n} \log \p_{\wSAB_{u_n,v}}(|R_{\gamma}|> \delta u_n) = 0.
\ee
\end{theorem}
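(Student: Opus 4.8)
The plan is to prove the stronger assertion that there is a single $\delta>0$, depending only on $v$ and on the model, for which $\p_{\wSAB_{n,v}}(|R_\gamma|>\delta n)\to1$ as $n\to\infty$. This is well posed for all large $n$ because $\RB_{n,v}\neq\varnothing$ (the straight bridge all of whose steps have first coordinate $D$ is $v$-wide, since \eqref{BA} forces $v<D$), and it immediately yields \eqref{T1eq} for \emph{every} increasing sequence $(u_n)$, the left‑hand side of \eqref{T1eq} being $\le 0$ for free. So the entire content is a non‑exponential‑smallness estimate, and it suffices to show that the complementary event $\{|R_\gamma|\le\delta n\}$ has exponentially small $\p_{\wSAB_{n,v}}$‑probability.

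First I would convert \eqref{BA} into an estimate valid for \emph{all} large $n$. Concatenating two $v$-wide bridges produces a $v$-wide bridge and multiplies the weights (the two halves share only the junction vertex, exactly as in the proof of Theorem~\ref{mu_c}), so $n\mapsto\log Z(\RB_{n,v})$ and $n\mapsto\log H_n$ are super‑additive; by Fekete's lemma $\tfrac1n\log Z(\RB_{n,v})\to\lambda_v\le\lambda_0$ and $\tfrac1n\log H_n\to\lambda_0$. Since $\tfrac1n\log\p_{\wSAB_n}(x(\gamma(n))>vn)=\tfrac1n\log Z(\RB_{n,v})-\tfrac1n\log H_n\to\lambda_v-\lambda_0$, assumption \eqref{BA} forces $\lambda_v=\lambda_0$, hence for every $\e>0$ there is $N(\e)$ with $Z(\RB_{n,v})\ge e^{-\e n}H_n$ for all $n\ge N(\e)$. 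Consequently it is enough to find $\delta>0$ and $c>0$ with
\benn
Z\bigl(\RB_{n,v}\cap\{|R_\gamma|\le\delta n\}\bigr)\ \le\ e^{-cn}\,H_n\qquad\text{for all large }n,
\eenn
since then, taking $\e=c/2$, one gets $\p_{\wSAB_{n,v}}(|R_\gamma|\le\delta n)\le e^{-(c-\e)n}=e^{-cn/2}$.

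For this estimate, cut a bridge of length $n$ with $|R_\gamma|\le\delta n$ at its renewal points into $K\le\delta n+1$ irreducible bridges $\gamma_1,\dots,\gamma_K$; writing $\ell_i=|\gamma_i|$ and $w_i$ for the width of $\gamma_i$, we have $\sum\ell_i=n$, $\sum w_i>vn$, and, because an irreducible bridge re‑crosses every intermediate $x$-level at least three times except for the at most $D$ levels it traverses on its first step, $w_i\le\tfrac D3(\ell_i+2)$. If $v>D/3$ this already forces $vn<\sum w_i\le\tfrac D3(n+2K)$, i.e.\ $|R_\gamma|=K-1\ge\tfrac{3v-D}{2D}\,n-1$, so the set in question is empty once $\delta<\tfrac{3v-D}{2D}$ and $n$ is large: the bound holds deterministically. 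In the remaining (and principal) regime $v\le D/3$ I would pass to the i.i.d.\ renewal description, $Z(\RB_{n,v}\cap\{|R_\gamma|\le\delta n\})=e^{\lambda_0 n}\sum_{K\le\delta n+1}\p_{\iB}^{\otimes K}\!\bigl(\sum_{i\le K}\ell_i=n,\ \sum_{i\le K}w_i>vn\bigr)$, use $H_n\ge e^{-C\sqrt n}e^{\lambda_0 n}$ from Proposition~\ref{bcdecomp}, and aim to show that the probability that $\le\delta n$ independent irreducible bridges of total length $n$ aggregate a width exceeding $vn$ is exponentially small. Since the inefficient pieces contribute width at most $v$ times their total length, the excess width $vn$ must be carried by a bounded number of \emph{efficient} (high width‑to‑length ratio) irreducible pieces, which are necessarily ballistic; so the task reduces to bounding the aggregate weight of configurations dominated by long efficient irreducible bridges.

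The last point is the main obstacle. One cannot simply Chernoff the i.i.d.\ pairs $(\ell_i,w_i)$: the length of an irreducible bridge under $\p_{\iB}$ is heavy‑tailed — Section~\ref{secT2} shows $\E_{\iB}(|\gamma|)=\infty$ — so there is no free tilting in the $\ell$-direction, and the only leverage is the geometric inequality $w\le\tfrac D3\ell+O(1)$, which keeps the efficient pieces comparatively short. Turning this into the quantitative statement that $\le\delta n$ such pieces cannot accumulate width $vn$ except with probability $e^{-cn}$, with $c$ large enough to beat the $e^{O(\sqrt n)}$ factor from Proposition~\ref{bcdecomp}, is where the real work lies, and it is also where \eqref{BA}, through $\lambda_v=\lambda_0$ (equivalently, $v$-wide bridges carrying essentially the full bridge weight), must enter in a non‑cosmetic way — a purely combinatorial argument covers only $v>D/3$. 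Finally, if this estimate can be established only along a subsequence of $n$, the statement \eqref{T1eq} as literally written is still recovered, since its left‑hand side is $\le 0$ regardless.
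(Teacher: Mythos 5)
Your proposal does not prove the theorem: it reduces it to exactly the estimate that constitutes the theorem's content and then leaves that estimate open. The preliminary step is fine (and is a nice observation): concatenation of $v$-wide bridges is weight-multiplicative, so $\log Z(\RB_{n,v})$ is super-additive, Fekete gives $\lambda_v=\lim\frac1n\log Z(\RB_{n,v})$, and \eqref{BA} forces $\lambda_v=\lambda_0$, whence $Z(\RB_{n,v})\ge e^{-\e n}H_n$ for large $n$. But after that you need, in the regime $v\le D/3$, that $Z\bigl(\RB_{n,v}\cap\{|R_\gamma|\le\delta n\}\bigr)\le e^{-cn}H_n$, i.e.\ that at most $\delta n$ independent $\p_{\iB}$-pieces of total length $n$ cannot aggregate width $vn$ except at exponentially small cost, and you explicitly acknowledge that you do not know how to establish this (``where the real work lies''), noting that a Chernoff bound fails because $|\gamma|$ is heavy-tailed under $\p_{\iB}$ and that \eqref{BA} must enter in some unspecified way. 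That missing step is the theorem. The paper's proof supplies precisely this mechanism by a different route: it counts hyperplanes $\pi_{x_0+1/2}$ crossed at most $m$ times ($Rl^m_\gamma$), notes that a $v$-wide bridge has at least $vn/2$ hyperplanes crossed fewer than $2D/v$ times, and then runs an induction on $m$ from $2D/v$ down to $1$ via Lemma \ref{L1}, whose proof unfolds zigzags (few zigzags: unfold all of them at subexponential entropy cost; many short zigzags with few renewals: unfold a calibrated subset to manufacture a density of renewal points). Nothing in your sketch replaces this unfolding/counting argument, and the conclusion you aim for (probability tending to $1$ for all large $n$) is in fact stronger than what the paper establishes (a subsequence statement with a constant-fraction weight comparison), which is all that \eqref{T1eq} requires.

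A secondary but genuine flaw: your geometric input for the ``deterministic'' regime, namely that an irreducible bridge re-crosses every intermediate $x$-level at least three times except for at most $D$ levels, hence $w_i\le\tfrac D3(\ell_i+2)$, is false for spread-out steps. With $x$-displacements alternating $+3,-1$ (so $D=3$) one gets bridges of arbitrary length with no renewal times in which roughly half of the intermediate levels are crossed exactly once; a level crossed once forces a renewal point only when the take-off vertex is a running maximum, which large jumps allow you to avoid. (This example saturates $w=\ell\approx D\ell/3$, so even if the numerical inequality survives, your justification of it does not, and the claim that single-crossed levels can only come from the first step is wrong.) So both the main regime and the easy regime of your plan are unsupported as written; to salvage the argument you would need either the paper's unfolding scheme or a genuinely new way to convert \eqref{BA} into control of the joint law of (length, width) of irreducible bridges.
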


\par To prove this theorem, we generalize the idea of renewal points and look at the hyperplanes $\pi_{x_0+1/2}=\{ v \in R^d, x(v)=x_0+1/2\}$ that have not many crossings with segments, corresponding to the steps of $\gamma$:
\benn
Rl_{\gamma}^m = \Big\{ x_0\in \N: 1 \le \Big|\{i: (\gamma(i), \gamma(i+1)) \cap \pi_{x_0+1/2} \ne \emptyset \} \Big| \le m \Big\}.
\eenn
where $(\gamma(i), \gamma(i+1))$ denotes the segment between the points $\gamma(i)$ and $\gamma(i+1)$.
It is easy to see that $Rl_{\gamma}^m \subset Rl_{\gamma}^{m+1}$ for any positive $m$.
\par For nearest-neighbor walks, there is a bijection between $R_{\gamma}$ and $Rl_{\gamma}^1$. This is not true in the more general case, but nonetheless there exists $D>0$ such that
\be \label{R2}
|R_{\gamma}| \le |Rl_{\gamma}^1| \le D |R_{\gamma}|
\ee
where $D$ is defined as in \eqref{D-nlines} and depends only on $\Omega$.

\par Define the following subsets of $\RB_{n,v}$:
\benn
\RB_{n,v,\delta }^m = \{\gamma \in \RB_{n,v} : \big| Rl_{\gamma}^m \big| > \delta n\}.
\eenn

Theorem \ref{T1} is a consequence of the following lemma.
\begin{lemma} \label{L1}
If the Ballistic Assumption holds, then for any  $v, \, \delta>0$ and $m\ge2$ and 
for any sequence $(u_n)$ in $Z_+$,
there exists $\delta' >0$ and a subsequence $(t_n)$ of $(u_n)$ such that
\be \label{L1eq}
\limsup_{t_n \to \infty} \tfrac{1}{t_n} \log \left( \frac{Z(\RB_{t_n,v,\delta }^m)}{Z(\RB_{t_n,v,\delta' }^{m-1})} \right) \le 0.
\ee
\end{lemma}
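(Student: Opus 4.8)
The plan is to prove Lemma \ref{L1} by a surgery/counting argument in the spirit of the renewal-point analysis: a bridge in $\RB_{t_n,v,\delta}^m$ has many hyperplanes crossed at most $m$ times; we show that by a bounded-complexity local modification near each such hyperplane we can produce a bridge in which the number of crossings of that hyperplane drops below $m$, i.e. lands in $Rl_\gamma^{m-1}$, at a controlled entropic and weight cost. Iterating (or doing it simultaneously at a positive fraction of the $\delta t_n$ hyperplanes) yields a map $\RB_{t_n,v,\delta}^m \to \RB_{t_n,v,\delta'}^{m-1}$ which is at most $e^{o(t_n)}$-to-one and changes $\sigma$ by a factor $e^{o(t_n)}$, giving \eqref{L1eq}. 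The subsequence $(t_n)$ and the constant $\delta'$ appear because the fraction of hyperplanes we can successfully modify depends on subadditive/averaging considerations that we only control along a subsequence.

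\textbf{Key steps, in order.} First, I would fix notation: for $\gamma \in \RB_{t_n,v,\delta}^m$, let $S(\gamma) \subset Rl_\gamma^m$ be the set of $x_0$ with between $2$ and $m$ crossings (those with exactly $1$ crossing are already in $Rl_\gamma^{m-1}$, so we only need to handle the rest); if $|Rl_\gamma^m \setminus Rl_\gamma^1| \le \tfrac{\delta}{2} t_n$ we are essentially done with $\delta' = \tfrac{\delta}{2}$, so assume $|S(\gamma)| \ge \tfrac{\delta}{2} t_n$. Second, the surgery: at a hyperplane $\pi_{x_0+1/2}$ crossed $j \in [2,m]$ times, the walk alternately enters and leaves the half-space $\{x > x_0\}$; I would reroute so that the $j$ excursions to the right of $\pi_{x_0+1/2}$ are concatenated into a single excursion (removing the intermediate returns), using at most a bounded number ($O(mD)$) of extra or deleted steps per hyperplane, after which that hyperplane is crossed exactly once. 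The excursions are pieced together using translations; one must check this is still a bridge — here the renewal/bridge structure and the fact that the reattachment respects the $x$-ordering are used. Third, the bookkeeping: the number of modified hyperplanes is $\ge \tfrac{\delta}{2}t_n$ but the positions can be encoded in $e^{o(t_n)}$ ways only if we are a bit careful, so instead I would modify all of $S(\gamma)$ at once in a canonical (leftmost-first, deterministic) fashion so that no encoding of positions is needed; the length change is then $O(mD)$ per hyperplane, i.e. $O(mD\,t_n)$ total — which is $\Theta(t_n)$, not $o(t_n)$, so this naive version is \emph{too costly}. The fix (and this is the real content) is to modify only a sparse subset: pick $\epsilon t_n$ of the hyperplanes in $S(\gamma)$ with $\epsilon \to 0$ slowly, so the total length change is $o(t_n)$, and argue that after many rounds the measure of the bad set still decays — which forces passing to a subsequence and explains the slightly weaker conclusion \eqref{L1eq} with a $\limsup \le 0$ rather than an equality.

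\textbf{Weight and injectivity estimates.} For the weight: each elementary surgery changes $l_v(\gamma)$ only for $v$ within bounded distance of the modified hyperplanes and by a bounded amount, so using $\phi(a+b)\ge\phi(a)+\phi(b)$ together with $\phi(1)=0$ one gets $\sigma(\gamma') \ge \sigma(\gamma)\, c^{-(\text{number of modified hyperplanes})} = \sigma(\gamma)\,e^{-o(t_n)}$, and the $\rho$-factors change by at most $e^{-o(t_n)}$ since only $o(t_n)$ steps are altered. For injectivity: the map is not injective, but given the image bridge $\gamma'$ and the (bounded) local "type" of each merged excursion, $\gamma$ is recovered; the number of preimages is therefore at most (number of ways to choose which hyperplanes were merged and with what local data) $\le \binom{t_n}{\epsilon t_n} C^{\epsilon t_n} = e^{o(t_n)}$. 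Combining, $Z(\RB_{t_n,v,\delta}^m) \le e^{o(t_n)} Z(\RB_{t_n,v,\delta'}^{m-1})$, which is \eqref{L1eq}.

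\textbf{Main obstacle.} The hard part is the tension between \emph{how many} hyperplanes must be fixed to decrease $|Rl_\gamma^m|$ below $\delta' t_n$ (which is $\Theta(t_n)$) and \emph{how few} one can afford to fix keeping the length distortion $o(t_n)$; resolving it requires an iterative scheme over $\log(1/\delta')/\log(\ldots)$-many rounds with geometrically shrinking bad-set density, and extracting the subsequence $(t_n)$ along which all rounds succeed simultaneously — this is exactly why the statement quantifies over an arbitrary sequence $(u_n)$ and produces a subsequence $(t_n)$. A secondary technical point is ensuring every surgered object is genuinely a bridge (monotonicity of $x$-coordinates at the endpoints) and that the merged excursions do not create new crossings of \emph{other} hyperplanes beyond the bounded-distance region — both follow from doing the reattachment within the slab between consecutive modified hyperplanes, but need to be stated carefully.
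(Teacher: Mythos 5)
Your proposal correctly identifies the central difficulty (one cannot afford to repair $\Theta(t_n)$ hyperplanes if each repair costs a constant amount of length and entropy), but it does not resolve it, and the surgery you propose breaks down for this model. The most serious gap is the weight estimate: you claim $\sigma(\gamma')\ge \sigma(\gamma)\,e^{-o(t_n)}$ because only $o(t_n)$ steps are altered and $l_v$ changes by a bounded amount near the modified hyperplanes. But the superadditivity $\phi(a+b)\ge\phi(a)+\phi(b)$ only gives an \emph{upper} bound on the weight of a recombined walk; translating and reattaching excursions can create new coincidences $l_v(\gamma')>l_v(\gamma)$, and since $\phi$ is allowed to take the value $+\infty$ (the self-avoiding case is explicitly included in the class of models), a single new self-intersection can send $\sigma(\gamma')$ to $0$. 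There is no per-surgery bound $\sigma(\gamma')\ge c\,\sigma(\gamma)$. This is exactly why the paper works with the unfolding operation $\Unf_{(i,j)}$, which reflects the middle section of a zigzag across a hyperplane: it preserves the length, maps bridges to bridges, and can only \emph{decrease} self-intersections, so $\sigma(\Unf_{(i,j)}(\gamma))\ge\sigma(\gamma)$ with no loss at all. (A secondary problem: your surgery adds and deletes steps, so the image does not lie in $\W_{t_n}$ and hence not in $\RB^{m-1}_{t_n,v,\delta'}$ as defined.)

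The second gap is that your "iterative scheme" does not close: to land in $\RB^{m-1}_{t_n,v,\delta'}$ you need a positive density $\delta' t_n$ of hyperplanes crossed at most $m-1$ times, yet you only permit yourself to repair $\e t_n$ hyperplanes with $\e\to 0$, and you give no mechanism by which repeated rounds accumulate a constant density of repaired hyperplanes — each round faces the identical obstruction. The paper escapes the dilemma by a trichotomy rather than by sparsifying the surgery. Either (A) a third of the weight of $\RB^m_{t_n,v,\delta}$ already sits on bridges with $\delta' t_n$ renewal points, and $|Rl^{m-1}_\gamma|\ge|Rl^1_\gamma|\ge|R_\gamma|$ finishes the proof; or (B) a third of the weight sits on bridges with only $\e_n t_n=o(t_n)$ zigzags, in which case unfolding \emph{all} zigzags turns every $m$-fold crossed hyperplane into an at most $(m-2)$-fold crossed one simultaneously, at zero weight and length cost and with multiplicity at most $\sum_{k\le 2\e_n t_n}\binom{t_n}{k}=e^{o(t_n)}$; or (C) a third of the weight sits on bridges with $\ge 2\e t_n$ zigzags but $o(t_n)$ renewal points, in which case unfolding a fixed proportion $\e' t_n$ of the \emph{short} zigzags manufactures $2\e' t_n$ new renewal points, and the combinatorial gain $\binom{\e t_n}{\e' t_n}$ beats the preimage count $\binom{\delta'_n t_n+3D\e' t_n/\e}{2\e' t_n}$ for a suitable fixed $\e'$. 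It is this case analysis — exploiting that the unfolding is free in weight and length, and trading zigzags against renewal points — that reconciles "a positive density of hyperplanes must become good" with "only $e^{o(t_n)}$ entropy may be spent," and it is the ingredient missing from your argument.
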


\par The proof of this lemma is based on the unfolding operation defined below.

\begin{opr}
Fix $\gamma \in \RB_n$. The pair of integers $(i,j), \, 0 < i \le j <n $ is called a {\em zigzag} of $\gamma$ if
\bal
& x(\gamma(k)) \le x(\gamma(i)) \quad &\forall&\, k<i, \nonumber \\
x(\gamma(j)) \le \,  & x(\gamma(k))  < x(\gamma(i))\quad &\forall& \, i<k<j, \nonumber \\
x(\gamma(j)) <  \, & x(\gamma(k)) \quad &\forall&\, k>j. \nonumber
\end{align}
\par The set of all zigzags of the walk will be denoted by $\ZZ_\gamma$.
\end{opr}
\par All zigzags in the set $\ZZ{_\gamma}$ are disjoint.

\begin{opr}[Unfolding]
Suppose that $\gamma \in \RB_n$ and $(i,j) \in \ZZ_{\gamma}$. Then, define the new bridge:
\be
\Unf_{(i,j)}(\gamma) := (\gamma(k))_{k=0}^i \circ \mathcal{R}_x\big((\gamma(k))_{k=i}^j\big) \circ (\gamma(k))_{k=j}^n.
\ee 
\end{opr}
Let us recall some elementary properties of this operation. 
\begin{lemma}
The following properties hold for any bridge $\gamma \in \RB_n$ and for any $(i,j) \in \ZZ_{\gamma}$:
\begin{itemize}
\item $\sigma(\gamma) \le \sigma \big(\Unf_{(i,j)}(\gamma)\big)$,
\item $x(\gamma(n)) \le x \big((\Unf_{(i,j)}(\gamma))(n)\big)$,
\item $\gamma(i)\in R_{\Unf_{(i,j)}(\gamma)}, \, \big(\Unf_{(i,j)}(\gamma)\big)(j) \in R_{\Unf_{(i,j)}(\gamma)}$,
\item $\ZZ_{\gamma} \backslash \{i,j\} \subset  \ZZ_{\Unf_{(i,j)}(\gamma)} $,
\item $\forall (i,j) , (\tilde{\imath} ,\tilde{\jmath}) \in \ZZ_{\gamma}: \, 
\Unf_{(i,j)} \Unf_{(\tilde{\imath}, \tilde{\jmath})}(\gamma) = \Unf_{(\tilde{\imath} ,\tilde{\jmath})}\Unf_{(i,j)}(\gamma) .$
\end{itemize}
\end{lemma}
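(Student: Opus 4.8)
\emph{Proof plan.} The plan is to split $\sigma$ into its potential part $\prod_v e^{-\phi(l_v)}$ and its step part $\prod_k\rho(\gamma(k)-\gamma(k-1))$ and treat them separately. Every step of $\Unf_{(i,j)}(\gamma)$ is either a step of $\gamma$ — this is the case inside the prefix $(\gamma(k))_{k\le i}$ and inside the suffix $(\gamma(k))_{k\ge j}$, which is merely translated by the vector $\bigl(2(x(\gamma(i))-x(\gamma(j))),0,\dots,0\bigr)$ — or the $\mathcal{R}_x$-reflection of a step of $\gamma$, inside the reflected middle block; since $\Omega$ and $\rho$ are invariant under $\mathcal{R}_x$, the product $\prod_k\rho(\gamma(k)-\gamma(k-1))$ is left unchanged. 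This also shows that $\Unf_{(i,j)}(\gamma)$ is again a walk, and combining the bridge inequality \eqref{bridgedef} for $\gamma$ with $x(\gamma(i))\ge x(\gamma(j))$ (built into the zigzag conditions) one checks it is again a bridge.

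For the potential part it suffices to prove $\sum_v\phi\bigl(l_v(\Unf_{(i,j)}(\gamma))\bigr)\le\sum_v\phi(l_v(\gamma))$, and the crux is the claim that unfolding never creates a new coincidence:
\[
\Unf_{(i,j)}(\gamma)(k)=\Unf_{(i,j)}(\gamma)(k')\ \Longrightarrow\ \gamma(k)=\gamma(k'),\qquad 0\le k,k'\le n.
\]
Writing $X_i=x(\gamma(i))$, $X_j=x(\gamma(j))$ and cutting $[0,n]$ at $i$ and $j$, the claim is immediate when $k,k'$ lie in a common piece, since on each piece the unfolded walk is a rigid image of the corresponding piece of $\gamma$. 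For the mixed cases I would use the three zigzag inequalities to see that after unfolding the prefix lies in $\{x\le X_i\}$, the reflected middle in $\{X_i\le x\le 2X_i-X_j\}$ and the shifted suffix in $\{x\ge 2X_i-X_j\}$; hence a coincidence between two distinct pieces can only occur on a separating hyperplane $\{x=X_i\}$ or $\{x=2X_i-X_j\}$, and there the reflection/translation is trivial, so the coincidence already holds in $\gamma$. Granting the claim, each level set of $\gamma$ (as a map $\{0,\dots,n\}\to\Z^d$) is a disjoint union of level sets of $\Unf_{(i,j)}(\gamma)$, so iterated superadditivity \eqref{phi} gives $\phi(l_v(\gamma))\ge\sum\phi\bigl(l_w(\Unf_{(i,j)}(\gamma))\bigr)$ over the relevant $w$, and summing over $v$ yields the inequality; this proves the first property.

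The remaining coordinates are never touched, and $x\bigl(\Unf_{(i,j)}(\gamma)(n)\bigr)=x(\gamma(n))+2(X_i-X_j)\ge x(\gamma(n))$, which is the second property. For the third, the positions listed above show that at time $i$ the unfolded walk sits at height $X_i$ with all earlier heights $\le X_i$ and all later ones $>X_i$, and at time $j$ at height $2X_i-X_j$ with all earlier heights $\le 2X_i-X_j$ and all later ones $>2X_i-X_j$ — exactly the defining inequalities of a renewal time. For the fourth property, all zigzags of $\gamma$ being pairwise disjoint as intervals, any $(i',j')\in\ZZ_\gamma$ other than $(i,j)$ satisfies $[i',j']\subseteq[0,i]$ or $[i',j']\subseteq[j,n]$; then $\Unf_{(i,j)}(\gamma)$ agrees with $\gamma$ up to time $i$, resp.\ is a translate of $\gamma$ on $[j,n]$, and the zigzag inequalities for $(i',j')$ reaching outside $[i',j']$ persist because everything from time $i$ onward has $x\ge X_i>x(\gamma(j'))$, resp.\ the mirror statement. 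Finally, disjointness of $[i,j]$ and $[\tilde\imath,\tilde\jmath]$ makes a direct computation of $x$-coordinates feasible for the commutativity statement: unfolding the earlier interval rigidly translates the later one, so the second unfolding acts with a reflection hyperplane and a shift translated by the same amount, the two shifts add, and both orders of composition produce identical $x$-coordinates (the other coordinates being those of $\gamma$ throughout). The one genuinely delicate step is the coincidence claim above; everything else is bookkeeping with the three zigzag inequalities.
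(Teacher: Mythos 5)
Your argument is correct, and there is in fact nothing in the paper to compare it against: the author writes ``We do not include the (easy) proof of this statement,'' so your write-up supplies the missing proof rather than an alternative one. You have correctly isolated the one point where the self-repelling setting genuinely differs from the self-avoiding case of \cite{DCH13}: the first item does not follow from a mere injectivity count but needs the no-new-coincidence claim $\Unf_{(i,j)}(\gamma)(k)=\Unf_{(i,j)}(\gamma)(k')\Rightarrow\gamma(k)=\gamma(k')$ fed into the superadditivity \eqref{phi} of the potential, exactly as you do; your decomposition into the strips $\{x\le X_i\}$, $\{X_i\le x\le 2X_i-X_j\}$, $\{x\ge 2X_i-X_j\}$, together with the observation that a boundary coincidence forces both times to sit at $\gamma(i)$, at the image of time $j$, or at two points carried by one and the same rigid motion (so the coincidence pulls back to $\gamma$), is a complete proof of that claim, and the remaining items are indeed bookkeeping. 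The only caveat worth recording is that your standing inequality $x(\gamma(i))\ge x(\gamma(j))$ is forced by the zigzag conditions only when $j\ge i+2$ (through the chain of inequalities on the intermediate points) or when $j=i$; for $j=i+1$ the definition as literally written does not imply it, and without it the second item fails. This is an imprecision in the paper's definition rather than a gap in your reasoning --- every use of zigzags in Section \ref{secBA} tacitly assumes $x(\gamma(j))\le x(\gamma(i))$ --- but a careful version of your proof should either add that inequality to the definition of a zigzag or explicitly exclude the degenerate case $j=i+1$ with $x(\gamma(i+1))>x(\gamma(i))$.
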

We do not include the (easy) proof of this statement.
The last property allows us to define the unfolding of a set of zigzags as a row of successive unfoldings (the order in which we do the unfolding operations is irrelevant).

\begin{proof}[Proof of Lemma \ref{L1}]
Let us fix $m>2$, $v>0$, $\delta>0$ and any sequence of positive integers $(u_n)$. Look at the sequence $( \RB_{u_n,v,\delta }^m )^\infty_{n=0}$.
At least one of the following propositions must be true:

\begin{case} \label{case1}
The number of sets where a positive density of renewal points has quite high probability to occur is infinite:
\par There exists $\delta'>0$ and a subsequence $(t_n)$ of $(u_n)$ such that
\be \label{case1eq}
Z\bigl( \bigl\{ \gamma \in   \RB_{t_n,v,\delta }^m : |R_\gamma| \ge \delta' t_n  \bigr\}\bigr) \ge \tfrac{1}{3} Z\bigl(  \RB_{t_n,v,\delta }^m \bigr).
\ee
\end{case}

\begin{case} \label{case2}
In any set $\RB_{u_n,v}$ there is a good probability that the number of zigzags in a walk is sufficiently small:
\par There exists $(\e_n) \searrow 0$ such that
\be
Z  \bigl( \{ \gamma \in \RB_{u_n,v,\delta }^m :  |\ZZ_\gamma| \le \e_n u_n \} \bigr) 
\ge \tfrac{1}{3} \, Z (\RB_{u_n,v,\delta }^m).
\ee
\end{case}

\begin{case} \label{case3}
The number of $u_n$ such that a bridge with positive density of zigzags and sufficiently small number of renewal points has a high probability to occur in $\RB_{u_n,v}$ is infinite.

\par There exist $\e>0$,  a sequence $(\delta'_n) \searrow 0$ and a subsequence $(t_n)$ of $(u_n)$ such that
\be \label{case3eq}
Z  \bigl( \{ \gamma \in \RB_{t_n,v,\delta }^m : |R_\gamma| \le \delta'_n t_n \text{ and }|\ZZ_\gamma| \ge 2\e t_n \} \bigr) 
\ge \tfrac{1}{3} \,Z (\RB_{t_n,v,\delta }^m).
\ee
\end{case}

\begin{proof}[Proof in Case \ref{case1}]
\par Inequality \eqref{L1eq} follows directly from \eqref{case1eq} and the fact that \newline $|Rl_{\gamma}^{m-1}| \ge |Rl_{\gamma}^{1}| \ge |R_{\gamma}|$ for any $m$ by \eqref{R2}.
\end{proof}
%

\begin{proof}[Proof in Case \ref{case2}]
\par Take $\gamma \in \RB_{u_n,v,\delta }^m$ satisfying the property $|\ZZ_\gamma| \le \e_n u_n$.
\par For each hyperplane $\pi \in Rl_{\gamma}^m$ that has exactly $m$ crossings with $\gamma$, there exists at least one zigzag $(j,k) \in \ZZ_\gamma$ such that $(\gamma(i))_{i=j}^k$ intersects $\pi$. 
Hence, the two parts $(\gamma(i))_{i=0}^j$ and $(\gamma(i))_{i=k}^{u_n}$ also have at least one crossing with $\pi$. 
If we unfold this zigzag, then all points of $\gamma$ after step $j$ will have a larger $x$-coordinate than in $\pi$. 
Then, $\pi$ has no more than $m-2$ crossings with $\Unf_{(i,j)}(\gamma)$. 
For any other hyperplane in $Rl_{\gamma}^m$, there is a corresponding hyperplane in $Rl_{\Unf_{(j,k)}(\gamma)}^m$ with at most the same number of crossings. 
\par Let us repeat this operation and unfold all zigzags in $\ZZ_\gamma$. The resulting walk $\tilde{\gamma} = \Unf_{\ZZ_{\gamma}}(\gamma)$ will satisfy the following property:
\be
|Rl^{m-1}_{\tilde{\gamma}}| \ge |Rl^{m}_{\gamma}|.
\ee
\par Now, let us choose a walk $\tilde{\gamma} \in \RB_{u_n,v,\delta}^{m-1}$ and bound the number of walks $\gamma$ that gives $\tilde{\gamma}$ as a result of the unfolding operation. The number of $ \gamma \in \RB_{u_n,v,\delta }^m$ such that $ |\ZZ_\gamma| \le \e_n u_n $ and $ \Unf_{\ZZ_{\gamma}}(\gamma) = \tilde{\gamma}$ is equal to the number of possible ways to pick at most $2 \e_n u_n$ points of $\tilde{\gamma}$ 
to form all zigzags in $\gamma$. Thus,
\be \label{unf-1card}
|\{\gamma \in \RB_{u_n,v,\delta }^m: \Unf_{\ZZ_{\gamma}}(\gamma) = \tilde{\gamma} \}|
\le \sum_{k=0}^{2e_n u_n} {u_n \choose k} 
\le \exp{\left(2 u_n \e_n \log\left(\tfrac{1}{2 \e_n}\right)\right)}.
\ee
Here, we used the bound
$\sum_{k=0}^{\e n} {n \choose k} \le 2^{nH(\e)}$, where $\e < 1/2$ and $H(\e) = -\e \log_2(\e) - (1-\e)\log_2(1-\e)$.
\par For all $\gamma$ in this set, $\sigma(\gamma) \le \sigma(\tilde{\gamma})$ so
\benn
\sigma(\tilde{\gamma}) \ge \sum_{\gamma \in \RB_{u_n,v,\delta }^m, \,\Unf_{\ZZ_{\gamma}}(\gamma) = \tilde{\gamma}}
\frac{\sigma({\gamma})}{\exp{\left(2 u_n \e_n \log\left(\frac{1}{2 \e_n}\right)\right)}}.
\eenn

\par The set of all $\tilde{\gamma}$ that have a preimage in $\RB_{u_n,v,\delta }^m$ is not bigger than $\RB_{u_n,v,\delta }^{m-1}$, so
\be
Z\left( \RB_{u_n,v,\delta}^{m-1} \right) \ge 
\frac{\frac{1}{3} Z\left( \RB_{u_n,v,\delta}^{m} \right)}{\exp{\left(- 2 u_n \e_n \log\left(\frac{1}{2 \e_n}\right)\right)}} .
\ee
This inequality and the fact that $(\e_n) \searrow 0$ implies the statement of Lemma \ref{L1}.
\end{proof}

\begin{proof}[Proof in Case \ref{case3}]
\par The idea of this proof is to unfold the necessary number of small zigzags and to obtain some renewal points by this unfolding.

\par Let us take a bridge $\gamma \in \RB_{t_n,v,\delta }^m$ such that $|R_\gamma| \le \delta'_n t_n$ and $|\ZZ_\gamma| \ge 2\e t_n$. 
We can define a set containing all small zigzags of $\gamma$: 
\be
\ShZZ_\gamma = \left\{(i,j) \in \ZZ_\gamma: j-i \le \tfrac{1}{\e}\right\}.
\ee
The central sections of all zigzags in $\ZZ_\gamma$, i.e. the parts $(\gamma(k))_{k=i}^{j}$, are disjoint and the sum of the number of steps in all central sections is not bigger than $t_n$. 
Inequality $|\ZZ_\gamma| \ge 2\e t_n$ implies that 
\be
\left\{\gamma\in \RB_{t_n,v,\delta }^m: \left|\ZZ_\gamma \right| \ge 2\e t_n \right\} \subset
\left\{\gamma\in \RB_{t_n,v,\delta }^m: \left|\ShZZ_\gamma \right| \ge \e t_n \right\}.
\ee
Let us define
\be
RB_c =  \{ \gamma \in \RB_{t_n,v,\delta }^m : |R_\gamma| \le \delta'_n t_n \text{ and }|\ShZZ_\gamma| \ge \e t_n \}.
\ee
It is easy to see that this set contains the set defined on the right-hand side of \eqref{case3eq}.

\par Now, let us take a subset $\mathrm{ZZ} \subset \ShZZ_\gamma$ of size $\e' t_n < \e t_n$. The precise value of $\e'$ will be defined later. Then, unfold all zigzags in $\mathrm{ZZ} $. The resulting walk  $\tilde{\gamma} = \Unf_{\mathrm{ZZ} }(\gamma)$ has at least $2 \e' t_n$ renewal points, i.e.
\be
\tilde{\gamma} \in \RB_{t_n,v, 2 \e' }^1.
\ee
Different walks can be obtained by the different choice of $\mathrm{ZZ} $.  
For fixed $\gamma$, the number of ways to pick $\mathrm{ZZ} $ can be estimated as follows:
\be \label{min_gt_from_g}
\left|\left\{
\tilde{\gamma}: \exists \mathrm{ZZ}  \subset \ShZZ_\gamma, |\mathrm{ZZ} | = \e' t_n, \Unf_{\mathrm{ZZ} }(\gamma) = \tilde{\gamma}
\right\}\right| \ge \binom{\e t_n}{\e' t_n}.
\ee

Define the set of all possible pairs $(\gamma, \mathrm{ZZ} )$:
\be
\mathrm{BZ}  = \left\{ \left(\gamma,\mathrm{ZZ}  \right): 
\gamma \in RB_c, \mathrm{ZZ}  \subset \ShZZ, |\mathrm{ZZ} | = \e' t_n \right\}\!.
\ee

The number of renewal points of $\tilde{\gamma}$ can be bounded from below.

\par Let us unfold one zigzag $(i,j)\in \mathrm{ZZ} $ in $\gamma$ and look at the number of crossings of $\gamma$ and $\tilde{\gamma} = \Unf_{(i,j)}(\gamma)$ with different hyperplanes $\pi_{x_0} = \{x= x_0+ 1/2\}$.
For any $x_0 < x(\gamma(j))$, the number of crossings is preserved, so $\tilde{\gamma}$ does not contain any renewal points except the points that were already present in $R_\gamma$. 
For any $x_0 \ge x(\gamma(i))$, there is a correspondence between the crossings of $\gamma$ and $\pi_{x_0}$ and between the crossings of $\gamma$ and $\pi_{x_0}$ and the crossings of $\tilde{\gamma}$ and $\pi_{x_0+2(x(\gamma(i))-x(\gamma(j)))}$. This part of $\tilde{\gamma}$ will nor have any new renewal points. 
The remaining middle part of $\tilde{\gamma}$ has width $3\left|x(\gamma(i))-x(\gamma(j))\right|$ that can be bound by $3(j-i)D$, where $D$ is defined in \eqref{D-nlines}. In this gap there can be maximum $3(j-i)D$ renewal points. Note that $(i,j) \in \ShZZ_{\gamma}$ and that $j-i\le \frac{1}{\e}$.

\par This operation can be applied consequentially for all zigzags in $\mathrm{ZZ} $ and gives the following result:
\be \label{NRenewal}
|R_{\Unf_{\mathrm{ZZ} }(\gamma)}| \le \delta'_n t_n + \e' t_n \frac{3 D}{\e}.
\ee

\par For each $\tilde{\gamma}$, there can be many pairs $(\gamma, \mathrm{ZZ} ) \in \mathrm{BZ} $ that gives $\tilde{\gamma}$ after unfolding. Their number can be bounded in the following way.
\par The number of possible ways to make $\e' t_n$ zigzags to obtain $\gamma$ from $\tilde{\gamma}$ is not bigger than the number of ways to choose $2 \e' t_n$ points from all renewal points of $\tilde{\gamma}$. Then, we can use  inequality \eqref{NRenewal} to obtain the following bound: 
\be \label{max_g_from_gt}
\left| \left\{
(\gamma, \mathrm{ZZ} ) \in \mathrm{BZ}  :\, 
\Unf_{\mathrm{ZZ} }(\gamma) =\tilde{\gamma}
\right\} \right| \le
\binom{\delta'_n t_n + \e' t_n \frac{3 D}{\e} }{ 2 \e' t_n}.
\ee

\par We can use inequalities \eqref{case3eq}, \eqref{min_gt_from_g} and \eqref{max_g_from_gt} and the bound 
$\left(\frac{a}{b}\right)^b \le \binom{a}{b} \le \left(\frac{e \cdot a}{b}\right)^b$ to obtain
\begin{align}
Z ( \RB_{t_n,v, 2 \e' }^1) 
&\ge Z ( RB_c ) \cdot \frac
{\min_{\gamma \in RB_c} \left|\left\{ \tilde{\gamma}: \exists (\gamma, \mathrm{ZZ} ) \in \mathrm{BZ} , \Unf_{\mathrm{ZZ} }(\gamma) = \tilde{\gamma}\right\}\right|}
{\max_{\tilde{\gamma} \in \RB_{t_n,v, 2 \e' }^1}\left|\left\{ (\gamma, \mathrm{ZZ} ) \in \mathrm{BZ} : \Unf_{\mathrm{ZZ} }(\gamma) = \tilde{\gamma}\right\}\right|} \nonumber \\ \nonumber
&\ge  \frac{1}{3} Z ( \RB_{t_n,v,\delta }^m ) \cdot \frac
{\binom{\e t_n}{\e' t_n}}
{\binom{\delta'_n t_n + \e' t_n \frac{3 D}{\e} }{ 2 \e' t_n}} \\
& \ge \frac{1}{3} Z ( \RB_{t_n,v,\delta }^m ) \cdot 
\left( \frac{\e}{\e'}\cdot \frac{4}{e^2} \cdot \left(\frac{\e'}{\delta'_n + 3 D\frac{\e'}{\e}}\right)^2\right)^{ \!\!\! \e' t_n}\!\!\!\!\!\!\!.\,
\end{align}
\par The result of Lemma \ref{L1} holds when a constant lower bound on 
$\frac{\e \cdot \e'} {\left(\delta'_n + 3 D\frac{\e'}{\e}\right)^2}$ is bigger than $2$. We can choose 
$\e' = \frac{\e^3}{2(6D)^2}$ and use the fact that $\delta'_n \searrow 0$ to obtain this bound. Then,
\benn
Z ( \RB_{t_n,v, 2 \e' }^1) \ge \frac{1}{3} Z ( \RB_{t_n,v,\delta }^m ) \left( \frac{8}{e^2} \right)^{\e' t_n},
\eenn
which implies \eqref{L1eq}.



\end{proof}
This finishes the proof of Lemma \ref{L1} in the general case.
\end{proof}

\begin{proof}[Proof of Theorem \ref{T1}]
For any $\gamma \in \RB_{n,v}$ the number of hyperplanes $\pi_k$ crossed at least once is bigger than $v n$. Thus, at least half of them are crossed less than $\frac{2 D}{v}$ times where $D$ is defined in \eqref{D-nlines}. Hence, we deduce that
\be 
\RB_{u_n,v} = \RB_{u_n,v, \frac{v}{2}}^{\frac{2 D}{v}}.
\ee
\par We can apply Lemma \ref{L1}  ${\frac{2 D}{v}}$ times with $m$ chosen decreasingly from ${\frac{2 D}{v}}$ to $2$. We have to take $\delta = \frac{v}{2}$ at the first step and set it equal to $\delta'$ from the previous step afterwards. Then, we can find $\delta'>0$ and $(t_n$ a subsequence of $(u_n))$ such that
\be \nonumber
\lim_{t_n \to \infty} \frac{1}{t_n} \log  \left( \frac{Z(\RB_{t_n,v,\delta' }^1)}{Z(\RB_{t_n,v, \frac{v}{2}}^{\frac{2 \alpha}{v}} )} \right) \ge 0.
%
%
\ee
\par The proof follows for $\delta'' = \tfrac{\delta'}{D}$.
\end{proof}

\begin{cor} \label{collsec2}
If the Ballistic Assumption holds, then 
\be \label{Einf}
\E_{\iRB}(|\gamma|) < \infty.
\ee
\end{cor}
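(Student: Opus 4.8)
The plan is to argue by contradiction: assuming $\E_{\iRB}(|\gamma|)=\infty$, I combine Theorem~\ref{T1}, the Ballistic Assumption, and a Chernoff (lower‑tail large‑deviation) bound for the renewal structure of $\p_{\iRB}^{\otimes\N}$ to reach a contradiction. First I translate the Ballistic Assumption into a partition‑function statement: since $Z(\RB_{n,v})=H_n\,\p_{\wSAB_n}\bigl(x(\gamma(n))>vn\bigr)$, since $\tfrac1n\log H_n\to\lambda_0$, and since $Z(\RB_{n,v})\le H_n$, assumption~\eqref{BA} gives $\limsup_n\tfrac1n\log Z(\RB_{n,v})=\lambda_0$. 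Fix an increasing sequence $(n_j)$ of positive integers realizing this $\limsup$, i.e.\ with $\tfrac1{n_j}\log Z(\RB_{n_j,v})\to\lambda_0$, and apply Theorem~\ref{T1} \emph{to this very sequence} $(u_n)=(n_j)$: it produces $\delta>0$ and a subsequence $(m_k)$ of $(n_j)$ with $\tfrac1{m_k}\log\p_{\wSAB_{m_k,v}}\bigl(|R_\gamma|>\delta m_k\bigr)\to0$. Writing this conditional probability as a quotient of partition functions, adding the two logarithmic asymptotics, and then enlarging $\RB_{m_k,v}$ to $\B_{m_k}$ (which only increases $Z$) while using $\tfrac1n\log H_n\to\lambda_0$ once more as an upper bound, one obtains
\[
\tfrac1{m_k}\log Z\bigl(\{\gamma\in\B_{m_k}:|R_\gamma|>\delta m_k\}\bigr)\ \longrightarrow\ \lambda_0 .
\]

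Next I rewrite this partition function through the renewal decomposition. Splitting a bridge into its irreducible pieces, exactly as in the proof of Lemma~\ref{infcondfin}, gives for every $n$
\[
Z\bigl(\{\gamma\in\B_n:|R_\gamma|>\delta n\}\bigr)=e^{\lambda_0 n}\,\p_{\iRB}^{\otimes\N}\bigl(n\in R_\gamma,\ |R_\gamma\cap\{1,\dots,n\}|>\delta n\bigr).
\]
Together with the previous display this forces $\tfrac1{m_k}\log\p_{\iRB}^{\otimes\N}\bigl(|R_\gamma\cap\{1,\dots,m_k\}|>\delta m_k\bigr)\to0$: under $\p_{\iRB}^{\otimes\N}$, the event that the walk has at least $\delta n$ renewal times before time $n$ is \emph{not} exponentially unlikely along $(m_k)$.

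This contradicts a Chernoff bound. Under $\p_{\iRB}^{\otimes\N}$ the inter‑renewal lengths $X_j=|\gamma_j|$ are i.i.d.\ with law $\p_{\iRB}$, satisfy $X_j\ge1$, and have $\E X_1=\E_{\iRB}(|\gamma|)=\infty$ by the standing assumption; write $S_m=X_1+\dots+X_m$ for the $m$‑th renewal time. If $|R_\gamma\cap\{1,\dots,n\}|>\delta n$, then $S_{\lceil\delta n\rceil}\le n\le\lceil\delta n\rceil/\delta$. The transform $\varphi(\theta):=\E\bigl[e^{-\theta X_1}\bigr]=\sum_{\gamma\in\iRB}e^{-\theta|\gamma|}\,\sigma(\gamma)\,e^{-\lambda_0|\gamma|}$ is $\le1$ (hence finite) for every $\theta\ge0$, equals $1$ at $\theta=0$, and $\tfrac{1-\varphi(\theta)}{\theta}\uparrow\E X_1=\infty$ as $\theta\downarrow0$ by monotone convergence; hence $h(\theta):=\theta/\delta+\log\varphi(\theta)$ satisfies $h(0)=0$ and $h(\theta)<0$ for all small $\theta>0$. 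Fixing such a $\theta_0$ and $c:=-h(\theta_0)>0$, Markov's inequality applied to $e^{-\theta_0 S_m}$ with $m=\lceil\delta n\rceil$ gives
\[
\p_{\iRB}^{\otimes\N}\bigl(|R_\gamma\cap\{1,\dots,n\}|>\delta n\bigr)\ \le\ \p\bigl(S_m\le m/\delta\bigr)\ \le\ e^{\theta_0 m/\delta}\varphi(\theta_0)^m=e^{m h(\theta_0)}\ \le\ e^{-c\delta n}.
\]
Taking $n=m_k$ contradicts the non‑exponential lower bound just obtained, so $\E_{\iRB}(|\gamma|)<\infty$.

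The only genuinely delicate point is the subsequence bookkeeping in the first step: Theorem~\ref{T1} must be invoked along the very subsequence $(n_j)$ coming from the Ballistic Assumption — which is why that theorem is stated for an arbitrary sequence $(u_n)$ — so that ``$Z(\RB_{n,v})=e^{\lambda_0 n+o(n)}$'' and ``a positive density of renewal times occurs with non‑exponentially‑small probability'' are realized simultaneously along one common subsequence $(m_k)$. Everything else — the renewal identity for $Z(\{\gamma\in\B_n:|R_\gamma|>\delta n\})$ and the lower‑tail Chernoff estimate — is routine once the i.i.d.\ renewal picture of $\p_{\iRB}^{\otimes\N}$ is in hand; note that, for this corollary, no use is made of the ergodicity of $\tau$ (Lemma~\ref{tauerg}).
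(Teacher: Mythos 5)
Your proof is correct, and at the top level it follows the same strategy as the paper: assume $\E_{\iRB}(|\gamma|)=\infty$, show that under $\p_{\iRB}^{\otimes\N}$ the event of having at least $\delta n$ renewals before time $n$ is exponentially unlikely, and contradict Theorem~\ref{T1}. The two implementations differ in an instructive way, though. For the large-deviation step, the paper truncates the inter-renewal length to a bounded (three-point) distribution whose mean is forced above the target rate and then invokes Cram\'er's theorem; you instead run a direct Chernoff bound on $S_m=X_1+\cdots+X_m$ using $\varphi(\theta)=\E[e^{-\theta X_1}]$ and the monotone-convergence fact that $(1-\varphi(\theta))/\theta\uparrow\E X_1=\infty$, which makes $h(\theta)=\theta/\delta+\log\varphi(\theta)$ negative for small $\theta$. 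Your route is self-contained, avoids the somewhat delicate tail/truncation bookkeeping in the paper (whose pointwise lower bound $\p_{\iRB}(|\gamma|\ge x)>C/x^{1+\alpha}$ for \emph{all} large $x$ is stated more strongly than infinite expectation actually warrants), and exploits the general principle that the lower tail of an i.i.d.\ sum below the mean always decays exponentially, even when the mean is infinite. You are also more careful than the paper about the measure-theoretic passage from the Ballistic Assumption and Theorem~\ref{T1} (stated for the conditioned measures $\p_{\wSAB_{u_n,v}}$) to a statement about $\p_{\iRB}^{\otimes\N}$, via the partition-function identities and the renewal decomposition of Lemma~\ref{infcondfin}; your remark that Theorem~\ref{T1} must be applied along the subsequence realizing the $\limsup$ in \eqref{BA} is exactly the point the paper glosses over. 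Your observation that Lemma~\ref{tauerg} is not needed here is also accurate.
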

\begin{proof}

Suppose that $\E_{\iRB}(|\gamma|) = \infty$. This implies that for any choice of $C >0$ and $\alpha > 0$, there exists a bound $x_0(C,\alpha)$ such that for any $x>x_0$,
\be \nonumber
\p_{\iRB}(|\gamma| \ge x) > \frac{C}{x^{1+\alpha}}.
\ee
Let us fix $C_0 =8/9$ and $\alpha_0=1/2 $ and define $M = \max(\frac{2}{v}, x_0(C_0,\alpha_0))$.

For any positive constant $A$, we can construct a three-point distribution $X$ as follows:
\bal
\p_X (X=0) &= \p_{\iRB}(|\gamma|< M), \, \nonumber \\
\p_X (X=M)& = \p_{\iRB}( M \le |\gamma| < A),\, \nonumber \\
\p_X (X=A) &=  \p_{\iRB}(|\gamma|\ge A). \nonumber
\end{align}

\par The expectation of this distribution is not smaller than $M$ if for a pair $(C,\alpha)$ satisfying $x_0(C, \alpha) \le M$:
\be \label{MAineq} 
(2M)^\alpha \le A^\alpha \le \frac{C}{2} \frac{M^\alpha}{M^{1+\alpha}-C}.
\ee
Futhermore these two inequalities hold simultaneously if 
\be \nonumber
C \ge \frac{(2M)^{1+\alpha}}{1+2^{1+\alpha}}.
\ee
\par Let us choose $\alpha = 2$ and correspondingly $C = \frac{8}{9} M^{3}$. Then, $\frac{C}{x_0^{1+\alpha}} = \frac{C_0}{x_0^{1+\alpha_0}}$ implies that $x_0(C, \alpha) \le M$ and allows to choose $A$ accordingly to \eqref{MAineq} and obtain
\be \label{expX}
\E_X (X) \ge \tfrac{2}{v}.
\ee

\par The size of the set of renewal points can be estimated as follows:
\be
\p_{\RB_n} (|R_\gamma| > v n) = \p_{\iRB} (\sum_{i=1}^{v n} |\gamma_i| < n) \le \p_X \left(\tfrac{1}{vn}\sum_{i=1}^{vn} X_i < \tfrac{1}{v}\right),
\ee
where random variables $X_i$ are independently distributed according to $X$.
\par Because of \eqref{expX}, this probability can be estimated by Cramer's Theorem \cite{C38} in large deviation theory. There exists a positive constant $c>0$ such that for any $n$ large enough
\be
\p_{\RB_n} (|R_\gamma| > v n) \le \p_X \left (\tfrac{1}{vn}\sum_{i=1}^{vn} X_i < \tfrac{1}{v} \right ) < e^{-cn}.
\ee
This inequality contradicts the consequence of the Ballistic Assumption proved in Theorem \ref{T1}.

\end{proof}

\section{The expectation of $|\gamma|$ is infinite} \label{secT2}
The object of this section is to prove the following theorem, which, combined with Corollary \ref{collsec2}, contradicts the Ballistic Assumption.

\begin{theorem} \label{T2}
The following holds:
\be \label{Enoninf}
\E_{\iRB}(|\gamma|) = \infty.
\ee
\end{theorem}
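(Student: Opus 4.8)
The plan is to prove $\E_{\iRB}(|\gamma|) = \infty$ by contradiction, assuming $\E_{\iRB}(|\gamma|) < \infty$, and deriving a contradiction with Kesten's lemma (Theorem \ref{keslem}). The key idea is that if irreducible bridges have finite expected length, then the concatenation of independently sampled irreducible bridges produces a bi-infinite random walk with a well-defined drift; by the ergodic theorem (Lemma \ref{tauerg}) and the law of large numbers, the $x$-coordinate grows linearly, i.e.\ $x(\gamma(n))/n \to 1/\E_{\iRB}(|\gamma|) \cdot \E_{\iRB}(x(\gamma(|\gamma|)))$ almost surely. The crucial quantitative point, which is the heart of Duminil-Copin--Hammond, is that this drift is \emph{strictly less than the maximal possible slope}: a positive fraction of the lattice sites along the walk are never visited, which forces $x(\gamma(n)) < (1-c)\cdot(\text{max slope})\cdot n$ for some $c>0$. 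This leaves ``room'' to append extra material to bridges without increasing their length too much, and that room, when inserted into the counting of bridges via $H_n$, contradicts the sharp identity \eqref{keslemeq}.

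Concretely, here is the sequence of steps I would carry out. First, assuming $\E_{\iRB}(|\gamma|) =: \mu < \infty$, use Lemma \ref{tauerg} together with the strong law of large numbers applied to the i.i.d.\ increments (renewal lengths $|\gamma_k|$ and widths $h_k$) to conclude that for $\p_{\iRB}^{\otimes \Z}$-almost every $\gamma$, the number of renewal points of $\gamma$ among the first $n$ steps is $\sim n/\mu$, and hence $|R_\gamma \cap \{1,\dots,n\}| / n$ converges to a positive constant almost surely. Second — and this is the geometric core — show that because each irreducible bridge of width $h$ and length $\ell$ satisfies $h \le D\ell$ but is \emph{irreducible}, it cannot be ``straight'': there is a uniform lower bound on the expected number of lattice columns crossed by an irreducible bridge that are crossed more than once (or, more robustly, a uniform positive probability that $|\gamma| \ge 2$ so that $\E_{\iRB}(|\gamma|) \ge 1 + c > 1$ combined with the width bound), forcing the realized slope $\E_{\iRB}(h)/\mu$ to be bounded away from $D$. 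Third, translate this gap into an exponential lower bound on $H_n$ relative to $e^{\lambda_0 n}$: roughly, since $H_n = \p_{\iRB}^{\otimes \N}(\gamma(n) \in R_\gamma) \asymp 1/(\mu)$ by the renewal theorem when $\mu < \infty$ (so $H_n e^{-\lambda_0 n}$ stays bounded below by a constant), the sum $\sum_n H_n e^{-\lambda_0 n}$ would converge — but Corollary \ref{RBradius} states it diverges. That is the contradiction.

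Let me be more careful about where the contradiction really bites, since $H_n e^{-\lambda_0 n} \to 1/\mu > 0$ already contradicts \eqref{sumbridges} directly: if $\mu < \infty$, the renewal theorem applied to the renewal process with inter-arrival law $\p_{\iRB}(|\gamma| = \cdot)$ gives $\p_{\iRB}^{\otimes \N}(n \in R_\gamma) \to 1/\mu$, so $H_n e^{-\lambda_0 n} \to 1/\mu$, whence $\sum_n H_n e^{-\lambda_0 n} = \sum_n \p_{\iRB}^{\otimes \N}(n \in R_\gamma)$ diverges \emph{trivially} — so that alone is not a contradiction. The genuine content must therefore come from comparing $H_n$ with $Z_n$: by Corollary after Proposition \ref{bcdecomp}, $\frac1n\log H_n \to \lambda_0 = \frac1n \log Z_n$, but the ballistic-type input (via Section \ref{secBA} or a direct argument) must be used to show that finiteness of $\mu$ would make bridges ``too efficient,'' i.e.\ would let one build walks reaching farther than $\lambda_0$ permits, contradicting the definition of $\lambda_0$ as the exponential growth rate. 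The main obstacle — and the step I expect to require the most work — is precisely this geometric non-straightness estimate for irreducible bridges: proving that an irreducible bridge must, on average, ``waste'' a positive density of its length, which is exactly the mechanism by which Duminil-Copin and Hammond obtained the strict inequality, and adapting their surgery (pattern-insertion / Kesten-pattern arguments) to the spread-out, soft-potential setting where $\phi$ is subadditive rather than $+\infty$ is the delicate part.
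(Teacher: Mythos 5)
Your proposal has a genuine gap at exactly the point you yourself flag as ``the delicate part'': you never supply the mechanism that turns the finite-mean assumption into a contradiction, and the direction you gesture at (showing the longitudinal drift $\E_{\iRB}(h)/\E_{\iRB}(|\gamma|)$ is bounded away from the maximal slope $D$, and then somehow contradicting \eqref{keslemeq} or the definition of $\lambda_0$) is not the one that works. A sub-maximal drift is trivially true but contradicts nothing: $\lambda_0$ is defined as the exponential growth rate of $Z_n$ and has no a priori quantitative relation to the renewal drift that could be violated. Your self-correction about the renewal theorem is sound --- $H_n e^{-\lambda_0 n} \to 1/\mu$ is perfectly consistent with \eqref{sumbridges} --- but after discarding that route you are left with a plan whose central step is announced rather than proved.

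The paper's actual argument lives in the transverse ($y$) direction, not the longitudinal one. Under \eqref{T2not}, the law of large numbers gives $y(\gamma(r_n))/n \to 0$, so the segment formed by the first $n$ irreducible blocks has length at most $\nu n$ and $y$-width at most $\e n$ with probability tending to $1$. Lemma \ref{diamonddensity} then establishes a positive density $\delta$ of ``diamond'' renewal times, at which the past and future of the walk lie inside a right-angled cone; its proof uses the symmetry of $\rho$ and the subadditivity of $\phi$ to prepend a cone-confining prefix at bounded multiplicative cost, plus the ergodicity of Lemma \ref{tauerg}. The contradiction then comes from the stickbreaking surgery: rotating the piece between two diamond points $d_i$ and $d_j$ by $\pi/2$ preserves the weight $\sigma$ exactly (it creates no new self-intersections), keeps the walk a bridge thanks to the cone condition, and produces a walk of $y$-width at least $\delta n/10 > \e n$. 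Each narrow initial segment yields on the order of $(\delta n/10)^2$ wide ones, while each wide one has at most $(\nu n)^2$ preimages, so the wide segments would carry probability bounded below by a positive constant --- contradicting the fact that their probability tends to $0$. Neither this polynomial-multiplicity surgery nor the diamond-point density that enables it appears in your proposal; these are the ideas you would need to supply to close the argument.
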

Theorem \ref{T2} will be proven by contradiction. Let us suppose that there exists a constant $\nu$ such that
\be \label{T2not}
\E_{\iRB}(|\gamma|) < \nu< \infty.
\ee
 
The main tool in this section is the operation of stickbreaking.

\begin{opr}
A renewal time of a bridge $\gamma(i) \in R_\gamma$ is called a diamond time of $\gamma$ if the two first coordinates of all other points of $\gamma$  lie in the cone 
\bee \label{cone}
\{(x,y): x >x(\gamma(i)), x-x(\gamma(i)) \ge y-y(\gamma(i)) > -(x-x(\gamma(i))) \}
\cup \nonumber \\
\{(x,y): x < x(\gamma(i)), x(\gamma(i))-x > y-y(\gamma(i)) ge -(x(\gamma(i))-x) \}.  \label{cone}
\eee
\par The set of all diamond times of the bridge $\gamma$ put in increasing order will be denoted by $D_\gamma$.
\end{opr}

\par The set $D_\gamma$ has a positive density in $R_\gamma$ under the assumptions \eqref{BA} and \eqref{T2not}.
\begin{lemma} \label{diamonddensity}
Suppose that $\E_{\iRB}(|\gamma|)  <\infty$. Then, there exists $\delta > 0$ such that 
\be																																						
\liminf_{n \to \infty} \frac{D_\gamma  \cap [0,n]}{n} \ge \delta
\ee
almost surely.
\end{lemma}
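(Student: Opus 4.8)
The statement is that, under the finite-mean assumption $\E_{\iRB}(|\gamma|) < \infty$, the diamond times have positive density among all renewal times (indeed among all indices up to $n$, since renewal times themselves have positive density by the law of large numbers applied to the i.i.d.\ irreducible-bridge lengths). The plan is to show that each renewal point $\gamma(r_k)$ is a \emph{diamond} time with probability bounded below by a fixed constant $p_0 > 0$, and that these events happen with enough independence (or at least enough stationarity and mixing) for an ergodic-theorem argument to yield a positive limiting density almost surely.

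First I would reduce to the bi-infinite stationary picture of Section~1: under $\p_{\iB}^{\otimes \Z}$, the walk decomposes into i.i.d.\ irreducible bridges glued at renewal points, and the shift $\tau$ along renewal points is measure-preserving and ergodic by Lemma~\ref{tauerg}. The event ``$\gamma(r_1)$ is a diamond time'' is a measurable function of the full trajectory, but because each irreducible bridge has finite expected length (hence finite expected width and finite expected vertical extent — the latter two being controlled by $D|\gamma|$ via \eqref{D-nlines}), the contribution of bridges far to the left or right of $\gamma(r_1)$ to the cone condition \eqref{cone} is negligible with high probability. Concretely: the cone condition at $\gamma(r_1)$ asks that every other point of the walk has $|y - y(\gamma(r_1))| < |x - x(\gamma(r_1))|$ strictly (with the right orientation). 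For a point lying $j$ renewal blocks away, its $x$-displacement from $\gamma(r_1)$ is a sum of $|j|$ i.i.d.\ positive widths (so grows linearly in $|j|$ with positive speed), while its $y$-displacement is a sum of i.i.d.\ mean-zero increments (the distribution is symmetric), so it grows like $\sqrt{|j|}$; hence the cone condition is automatically satisfied for all but finitely many blocks, and the number of ``dangerous'' blocks has an exponential tail uniform in the starting configuration. Thus ``diamond'' is, up to an event of small probability, determined by a bounded window of renewal blocks around $\gamma(r_1)$.

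Next I would produce the uniform lower bound $p_0$: conditionally on the window being small (say the dangerous range is contained in blocks $[-L, L]$ for a fixed large $L$, an event of probability $\ge 1-\eta$), I claim that with probability bounded below the finitely many bridges in that window, together with the central bridge, \emph{all} lie in the appropriate diamond cone around $\gamma(r_1)$. This is a ``finite-dimensional'' positivity statement: there is at least one irreducible bridge $\gamma^*$ of width $1$ (a single admissible step with positive $x$-coordinate and, by symmetry of $\Omega$, one can also arrange small $|y|$-coordinate, or at worst $|y| \le D$; if the minimal-width bridge does not automatically sit in the cone one instead uses that a fixed finite concatenation of such short steep bridges does), which has positive probability $q := \p_{\iB}(\gamma^*) > 0$ under $\p_{\iB}$; requiring the $2L+1$ bridges in the window all to equal $\gamma^*$ (or all to belong to a fixed finite family of ``steep'' bridges contained in the cone) has probability $\ge q^{2L+1} > 0$, and on that event one checks directly from the explicit cone inequalities \eqref{cone} that $\gamma(r_1)$ is a diamond time. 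Combining, $\p_{\iB}^{\otimes\Z}(\gamma(r_1) \text{ is a diamond time}) \ge (1-\eta)\, q^{2L+1} =: p_0 > 0$.

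Finally, Birkhoff's ergodic theorem applied to the indicator of the shift-invariant-generating event ``$\gamma(0)$ is a diamond renewal point,'' under the ergodic measure-preserving shift $\tau$, gives that the density of diamond times among renewal times converges a.s.\ to $p_0 > 0$. Since by the strong law the renewal times $r_k$ satisfy $r_k / k \to \E_{\iRB}(|\gamma|) < \infty$, the number of renewal times in $[0,n]$ is $\sim n / \E_{\iRB}(|\gamma|)$, and hence the number of diamond times in $[0,n]$ is $\gtrsim p_0 n / \E_{\iRB}(|\gamma|)$, which gives $\delta = p_0 / \E_{\iRB}(|\gamma|)$ (any slightly smaller constant works to absorb fluctuations). \textbf{The main obstacle} I expect is the first step: making rigorous that the cone/diamond condition depends only on a bounded window of renewal blocks with a tail uniform in the configuration, i.e.\ controlling the $y$-fluctuations of far blocks against the linear growth of their $x$-displacement, uniformly enough to pass from ``most blocks are harmless'' to ``a genuinely bounded window suffices with high probability.'' This requires a concentration estimate for the partial sums of the (possibly heavy-ish, but finite-mean, symmetric) step-distribution of irreducible bridges, and care that the definition of diamond time involves \emph{all} other points of the walk, not just renewal points — one must also control the excursions \emph{within} each far-away irreducible bridge, which again reduces to the $D|\gamma_j|$ bound together with the finite-mean hypothesis.
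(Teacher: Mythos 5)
Your overall architecture coincides with the paper's: show that a fixed renewal point is a diamond time with probability bounded below by some $p_0>0$, invoke stationarity/ergodicity of the shift $\tau$ (Lemma \ref{tauerg}) to get a.s.\ density $\ge p_0$ of diamond times among renewal times, and convert to density per unit length via $r_k/k \to \E_{\iRB}(|\gamma|)<\infty$. The last two steps are fine and match the paper. The gap is in the first step, in two places. First, your control of the far-away blocks rests on the claim that the $y$-displacement after $j$ blocks grows like $\sqrt{|j|}$ and that the number of ``dangerous'' blocks has an exponential tail uniform in the configuration. The hypothesis is only $\E_{\iRB}(|\gamma|)<\infty$, which gives a \emph{first} moment for the per-block displacements (and for $\max_t|y(t)|$ over a block, via the bound $D|\gamma|$); no second moment, hence no CLT scaling and certainly no uniform exponential tail, is available. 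What actually suffices — and is what the paper uses — is the soft strong-law statement $\bigl(x(\gamma(r_j))/j,\,y(\gamma(r_j))/j\bigr)\to(\mu,0)$ a.s.\ with $\mu>0$: this already implies that a.s.\ the whole semi-infinite walk lies in a $K$-relaxed cone $\{x(v)+K\ge y(v)>-(x(v)+K)\}$ for some finite random $K$, hence for some deterministic $K$ with probability $p_K>0$. No rate or concentration estimate is needed.

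Second, your bound $p_0\ge(1-\eta)\,q^{2L+1}$ multiplies the probabilities of two events that are not independent: whether the ``dangerous range'' is contained in $[-L,L]$ depends on the entire trajectory, including the window bridges you then want to force to equal $\gamma^*$. Moreover, forcing the window bridges to be steep single steps lying \emph{on} the boundary of the cone creates no slack with which to absorb the deficit $K$ of the tail. The paper's resolution of both points is to prepend a \emph{deterministic} finite prefix of $K$ copies of $\gamma_0=\tilde\gamma\circ\mathcal R_x(\tilde\gamma)$ (each copy gains strictly positive $x$-displacement while returning to $y=0$, so the concatenation converts the $K$-relaxed cone into the strict cone \eqref{cone}), and then to use the i.i.d.\ product structure of $\p_{\iRB}^{\otimes\N}$: the event that the first $K$ blocks equal $\gamma_0$ and the remaining blocks stay in the relaxed cone factorizes, giving $\p(\gamma(0)\in D_\gamma)\ge\sigma_0^K p_K$ (squared for the bi-infinite walk). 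Your proposal correctly identifies where the difficulty lies, but the tools you propose for it (concentration, uniform exponential tails) are not justified under the stated finite-mean hypothesis, so as written the key positivity step does not go through.
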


\begin{proof}
The probability measure $\p_{\iRB}$ is invariant under reflection $\mathcal R _x(\gamma)$. Therefore the expectation of the $y$-coordinate of the endpoint of $\gamma \in \iRB$ is equal to zero.
\par The finite expectation of $|\gamma|$ also implies that 
\be
\E_{\iRB}(\max_{t \in \gamma}|y(t)|) \le \E_{\iRB}(|\gamma|) <\infty.
\ee
\par We can apply the law of large numbers to find that for $\gamma$ distributed accordingly to $\p_{\iRB}^{\otimes \N}$, there exists a constant $\mu >0$ such that 
\be \label{ibend}
\left( \frac{x(\gamma(r_n))}{n}, \frac{y(\gamma(r_n))}{n} \right) \to (\mu,0) \text{ almost surely.}
\ee
This result implies that there exists a positive integer $K$ and a non-zero probability $p_K$ for $\gamma$ to lie in a half-cone $\{ v \in \Z^d, (x(v)+K) \ge y(v) > -(x(v)+K)\}$. Indeed,
\be \label{coneK}
\p_{\iRB}^{\otimes \N} \left( \left\{\inf_{i \ge 0} \left(x(\gamma(i))+y(\gamma(i))\right) \ge -K\right\} \cap \left\{\inf_{i \ge 0} \left(x(\gamma(i))-y(\gamma(i))\right) \ge -K\right\} \right) \ge p_K.
\ee
\par Taking any step of the distribution $\rho$ and applying all necessary reflections and turns, we can obtain a one step walk ${\tilde{\gamma}}$ such that $x(\tilde{\gamma}(1)) \ge y(\tilde{\gamma}(1)) \ge 0$.
Then, $\gamma_0 = \tilde{\gamma} \circ \mathcal R_x (\tilde{\gamma})$ is located in a cone $\{v\in \Z^d, x(v) \ge y(v) > -x(v)\}$ and the end of $\gamma_0$ lies on the hyperplane $\{v \in Z^d, y(v)=0\}$. The weight of the segment equivalent to $\gamma_0$ will be denoted by $\sigma_0$.
\par By construction of $\p_{\iRB}^{\otimes \N}$, we can add $K$ samples of $\gamma_0$ to the beginning of any infinite walk $\gamma$. If $\gamma$ lies in a cone $\{ v \in \Z^d, (x(v)+K) \ge y(v) > -(x(v)+K)\}$ used in \eqref{coneK}, then the result of this addition will be located in a cone $\{v\in \Z^d, x(v) \ge y(v) > -x(v)\}$.
The probability price of this operation is equal to $\sigma_0^K$.
\par We can combine this fact with \eqref{coneK} to obtain that
\be 
\p_{\iRB}^{\otimes \N} \left( \gamma(0) \in D_\gamma \right) \ge \sigma_0^K \,p_K.
\ee
\par The same bound is true for the bi-infinite random bridge:
\be 
\p_{\iRB}^{\otimes \Z} \left( \gamma(0) \in D_\gamma \right) =\delta \ge (\sigma_0^K \,p_K)^2.
\ee 
By the invariance of $\p_{\iRB}^{\otimes \Z}$ under the operation of shift, 
$\p_{\iRB}^{\otimes \Z} \left( \gamma(r_k) \in D_\gamma \right) =\delta$ for any $r_k \in R_\gamma $.
The estimated density of diamond points is then equal to 
$$\lim_{n \to \infty} \E_{\iRB}^{\otimes \Z}\left( \frac{D_\gamma \cap \{r_k\}_{k=0}^n}{n} \right) = \delta.$$
We can use this fact and apply Lemma \ref{tauerg} to the shift-invariant event 
$$\liminf_{n \to \infty} \left| \frac{D_\gamma \cap \{r_k\}_{k=0}^n}{n} \right| \ge \delta$$ 
to conclude that it has probability equal to 1.
\end{proof}

\par The definition of the diamond point can be extended to the bridges of finite length. It is easy to see that if $\gamma_n \in \RB_n$ coincides with the beginning of $\gamma$, then $D_{\gamma} \cap \gamma_n \subset D_{\gamma_n}$.
The operation consisting in taking the finite part of the bridge can only add new diamond points  but not destroy the initial ones. For bridges $\gamma$ with at least two diamond points, we can define the operation of stickbreaking.

\begin{opr}[Stickbreaking]
Suppose that $\gamma \in \RB_n$ and that there exist two points $i, j \in D_{\gamma}$ with $i<j$. Then, define a new bridge via the formula:
$$\StBr_{(i,j)}(\gamma) = (\gamma(k))_{k=0}^i \circ r_{\pi/2} \big((\gamma(k))_{k=i}^j\big) \circ(\gamma(k))_{k=j}^n$$. 
\end{opr}
This operation does not add any crossing to the walk, so the weight does not change: $\sigma(\gamma)=\sigma(\StBr_{(i,j)}(\gamma))$ for any choice of diamond points $i$ and $j$. Also, note that the result of this operation is not necessary a bridge.

\begin{proof}
\par Let us assume \eqref{T2not}. From any infinite bridge $\gamma \in \B_\infty$, we can take a finite beginning containing the first $n$ irreducible bridges of the walk: $\gamma^{(n)} = (\gamma(i))_{i=0}^{r_n}$. Let us use the notation $\tilde{\gamma} \triangleleft \gamma$ to say that there exists a renewal point $r_n \in R_\gamma$ such that $\tilde\gamma = \gamma^{(n)}$. 

\par Define the width of any finite bridge as follows:
\be
W(\gamma) = \max_{0 \le i,j,\le |\gamma|} \left( y(\gamma(i)) - y(\gamma(j)) \right).
\ee

\par Now, fix $\e > 0$ (the exact value of the constant $\e$ will be determined later). Look at the set of infinite bridges starting with not very long and not very wide finite bridges:
\be \label{WNbridge}
\overline{\RB_\infty^+}(n,\e) = \left\{ \gamma \in \RB_\infty^+: 
|\gamma^{(n)}| < \nu n,
W(\gamma^{(n)}) < \e n,
|D_{\gamma^{(n)}}| \ge \frac{\delta n}{2}\right\}.
\ee
The exact value of the constant $\e$ will be determined later.

The irreducible bridges that form $\gamma \in \RB_\infty^+$ are independent and identically distributed so we can use the law of large numbers and the formula \eqref{ibend} to conclude that 
\be
\lim_{n \to \infty}\p_{\iRB}^{\otimes \N}\left(\gamma: r_n < \nu n, W(\gamma^{(n)}) < \e n\right) =1.
\ee
The probability of the condition on the number of diamond points is the result of Lemma \ref{diamonddensity}:
\be
\lim_{n \to \infty}\p_{\iRB}^{\otimes \N} \left(\gamma: |D_{\gamma^{(n)}}| \ge \frac{\delta n}{2}\right) =1.
\ee
The combination of these two estimations gives us
\be \label{narrowprob1}
\lim_{n \to \infty} \p_{\iRB}^{\otimes \N} \left(\overline{\RB_\infty^+}(n,\e) \right) =1.
\ee
We obtain the contradiction with \eqref{narrowprob1} and prove the theorem by constructing the necessary amount of wide bridges using the operation of stickbreaking.

\par Let us define the set of all appropriate finite opening bridges as follows:
\be
(\overline{\RB_\infty^+}(n,\e))^{(n)} = \left\{
\tilde{\gamma}: \exists \gamma \in \overline{\RB_\infty^+}(n,\e), \tilde{\gamma} =\gamma^{(n)} 
\right\} .
\ee
Then, use Lemma \ref{infcondfin} to estimate the probability of this set in the following way:
\be
c \left((\overline{\RB_\infty^+}(n,\e))^{(n)}\right)=
\p_{\iRB}^{\otimes \N} \left(\overline{\RB_\infty^+}(n,\e)\right) =
\sum_{\tilde{\gamma} \in (\overline{\RB_\infty^+}(n,\e))^{(n)}} \mu_c^{-|\tilde\gamma|} \sigma(\tilde\gamma).
\ee

\par Let us take the bridge $\tilde\gamma \in (\overline{\RB_\infty^+}(n,\e))^{(n)}$ and the diamond points $d_i, d_j \in D_{\tilde\gamma}$ with $i \in [\frac{\delta n}{10}, \frac{2\delta n}{10}]$ and $j \in [\frac{3 \delta n}{10}, \frac{4\delta n}{10}]$. The result of the stickbreaking operation $\phi = \StBr_{d_i, d_j}(\tilde\gamma)$ is a bridge if the following conditions hold:
\begin{align}
& \min_{d_i \le k \le d_j} x(\phi (k)) > 0,  \label{tobeabridge1}
\\
& \max_{d_i \le k \le d_j} x(\phi (k)) \le x(\phi(|\tilde{\gamma}|)).\label{tobeabridge2}
\end{align}
Inequalities \eqref{tobeabridge1} and \eqref{tobeabridge2} are true if 
\be \label{tobeabridgecond}
x(\tilde\gamma(d_i)) - W((\tilde\gamma(k))_{k=d_i}^{d_j}) > \frac{\delta n}{10} - \e n >0.
\ee
To guarantee that the above is valid, choose for example $\e= \frac{\delta}{20}$.
\par The width of the result can be bound in the following way:
\be
W(\StBr_{d_i, d_j}(\tilde\gamma)) \ge \frac{\delta n}{10}\ge \e n.
\ee

\par The number of renewal points of $\phi = \StBr_{d_i, d_j}(\tilde\gamma)$ has the following upper bound:
\bal
|R_\phi| 
& = |R_{\tilde\gamma}| + |R_\phi \cap \{\phi(k)\}_{k=d_i}^{d_j}| - |R_{\tilde\gamma} \cap \{\tilde\gamma(k)\}_{k=d_i}^{d_j}| \nonumber \\
& \le n + W\left((\tilde\gamma(k))_{k=d_i}^{d_j}\right) - |j-i| \nonumber \\
& \le n+ \e n - \frac{\delta n}{10} \le n.
\end{align}
We can conclude that any $\gamma \in \B_{\infty}^+$ starting with $\phi$ does not belong to $\overline{\B_\infty^+}(n,\frac{\delta}{20})$ because $W(\gamma^{(n)}) > \e n$.

\par The length of $\tilde\gamma$ cannot be bigger than $\nu n$. Hence, the number of $\tilde\gamma$ that can form the beginning of some fixed $\gamma \in \RB_{\infty}^+$ after the stickbreaking with some choice of $i$ and $j$ can be bounded by the number of ways to choose $i$ and $j$ over $\nu n$ possibilities
\be \label{stbr-1card}
\left| \left\{
(\tilde\gamma, i, j) \in (\overline{\RB_\infty^+}(n,\e))^{(n)} \times [\tfrac{\delta n}{10}, \tfrac{2\delta n}{10}] \times [\tfrac{3 \delta n}{10}, \tfrac{4\delta n}{10}]: 
\StBr_{d_i, d_j}(\tilde\gamma) \triangleleft \gamma
\right\}\right| \le (\nu n)^2.
\ee
For any fixed choice of $\tilde{\gamma} \in (\overline{\RB_\infty^+}(n,\e))^{(n)}, \, i \in [\frac{\delta n}{10}, \frac{2\delta n}{10}]$ and $j \in [\frac{3 \delta n}{10}, \frac{4\delta n}{10}]$, Lemma \ref{infcondfin} implies that for $\phi = \StBr_{d_i, d_j}(\tilde\gamma)$,
\be \label{weight1}
\p_{\iRB}^{\otimes \N}(\exists \gamma \in \RB_\infty^+:  \phi \triangleleft \gamma) = 
e^{-\lambda_0 |\phi|} \sigma(\phi) = e^{-\lambda_0 |\tilde{\gamma}|} \sigma(\tilde{\gamma}) .
\ee
After the summing  of \eqref{weight1} over all possible $\tilde\gamma, i$ and $j$, and plugging the sum in \eqref{stbr-1card}, we obtain that
\bal
\left(\frac{\delta n}{10}\right)^2 \sum_{\tilde\gamma \in (\overline{\RB_\infty^+}(n,\e))^{(n)} }  e^{-\lambda_0 |\tilde{\gamma}|} \sigma(\tilde{\gamma})
&= \left(\frac{\delta n}{10}\right)^2\p_{\iRB}^{\otimes \N} (\overline{\RB_\infty^+}(n,\e)) \nonumber \\
&\le (\nu n)^2 \, \p_{\iRB}^{\otimes \N} \left( \gamma: \exists (\tilde\gamma, i, j): \StBr_{d_i, d_j}(\tilde\gamma) \triangleleft \gamma \right) \nonumber \\
&\le (\nu n)^2 \, \p_{\iRB}^{\otimes \N} (W(\gamma^{(n)}) > \e n ).
\end{align}
 This inequality contradicts \eqref{narrowprob1}, so the assumption \eqref{T2not} has to be rejected.

\end{proof}

\paragraph{Acknowledgments}  
The author thanks Hugo Duminil-Copin for posing this problem and reading the text.
The work is supported by the Swiss FNS and the NCCR Swissmap.

\bibliographystyle{alpha}

\end {document}